
\documentclass[letterpaper, 10 pt, conference]{ieeeconf}  

\IEEEoverridecommandlockouts                              
\overrideIEEEmargins                                      



\usepackage{graphics} 
\usepackage{epsfig} 
\usepackage{times} 
\usepackage{amsmath} 
\usepackage{amssymb}  
\usepackage{microtype}
\usepackage{graphicx}
\usepackage{subfig}
\usepackage{booktabs} 
\usepackage{amsfonts}
\usepackage{amsmath} 
\usepackage{mathrsfs}
\usepackage{doi}
\usepackage{bm}
\usepackage{float}

\usepackage{algorithm}
\usepackage{algorithmic}
\usepackage{xcolor}
\usepackage{cite}

\newtheorem{theorem}{Theorem}
\newtheorem{lemma}{Lemma}

\newtheorem{assumption}{Assumption}

\title{\LARGE \bf
Model-Reference Reinforcement Learning Control of Autonomous Surface Vehicles with Uncertainties}

\author{Qingrui Zhang$^{1,2}$,  Wei Pan$^{2}$, and Vasso Reppa$^{1}$
\thanks{$^{1}$Department of Maritime and Transport Technology, Delft University of Technology, Delft, the Netherlands
        {\tt\small Qingrui.Zhang@tudelft.nl; V.Reppa@tudelft.nl}}%
\thanks{$^{2}$Department of Cognitive Robotics, Delft University of Technology, Delft, the Netherlands
        {\tt\small Wei.Pan@tudelft.nl}}%
}

\begin{document}

\maketitle
\thispagestyle{empty}
\pagestyle{empty}

\begin{abstract}
This paper presents a novel model-reference reinforcement learning control method for uncertain autonomous surface vehicles. The proposed control combines a conventional control method with deep reinforcement learning.  With the conventional control, we can ensure the learning-based control law provides closed-loop stability for the overall system, and potentially increase the sample efficiency of the deep reinforcement learning. With the reinforcement learning, we can directly learn a control law to compensate for modeling uncertainties. In the proposed control, a nominal system is employed for the design of a baseline control law using a conventional control approach. The nominal system also defines the desired performance for uncertain autonomous vehicles to follow.  In comparison with traditional deep reinforcement learning methods, our proposed learning-based control can provide stability guarantees and better sample efficiency. We demonstrate the performance of the new algorithm via extensive simulation results.
\end{abstract}


\section{INTRODUCTION}
Autonomous surface vehicles (ASVs) have been attracting more and more  attention, due to their advantages in many applications, such as environmental monitoring \cite{Jones2019STTE}, resource exploration \cite{Majohr2006}, shipping \cite{Levander2017IEEE}, and many more. Successful launch of ASVs in real life requires accurate tracking control along a desired trajectory \cite{Do2004Automatica,Do2006Auto, Sonnenburg2013JFR}. However, accurate tracking control for ASVs is challenging, as ASVs are subject to uncertain nonlinear hydrodynamics and unknown environmental disturbances \cite{Fossen2011Handbook}. Hence, tracking control of highly uncertain ASVs has received extensive research attention \cite{Soltan2009ACC, Yu2012IET, Wang2016TC, Wang2017OE, Woo2019OE}.

Control algorithms for uncertain systems including ASVs mainly lie in four categories: 1) robust control which is the ``worst-case''  design for bounded uncertainties and disturbances \cite{Yu2012IET}; 2) adaptive control which adapts to system uncertainties with parameter estimations \cite{Do2004Automatica, Do2006Auto}; 3) disturbance observer-based control which compensates uncertainties and disturbances in terms of the observation technique \cite{Wang2017OE, Zhang2018TIE}; and 4) reinforcement learning (RL) which learns a control law from data samples \cite{Woo2019OE, Shi2019TNNLS}. The first three algorithms follow a model-based control approach, while the last one is data driven. Model-based control can ensure closed-loop stability, but a system model is indispensable. Uncertainties and disturbances of a system should also satisfy different conditions for different model-based methods. In robust control, uncertainties and disturbances are assumed to be bounded with known boundaries \cite{Shen1995TAC}. As a consequence, robust control will lead to conservative high-gain control laws which usually limits the control performance (i.e., overshoot, settling time, and stability margins) \cite{Liu2008CDC}. Adaptive control can handle varying uncertainties with unknown boundaries, but system uncertainties are assumed to be linearly parameterized with known structure and unknown constant parameters \cite{Haddad2002IJACSP,Zhang2018AST}.  A valid adaptive control design also requires a system to be persistently excited, resulting in the unpleasant high-frequency oscillation behaviours in control actions \cite{Ioannou1996}. On the other hand, disturbance observer-based control  can adapt to both uncertainties and disturbances with unknown structures and without assuming systems to be persistently excited \cite{Zhang2018TIE, Zhu2018IJRNC}.  However, we need the frequency information of uncertainty and disturbance signals when choosing proper gains for the disturbance observer-based control, otherwise it is highly possible to end up with a high-gain control law \cite{Zhu2018IJRNC}. In addition, the disturbance observer-based control can only address matched uncertainties and disturbances, which act on systems through the control channel \cite{Zhang2018AST, Mondal2013ISA}. In general, comprehensive modeling and analysis of systems are essential for all model-based methods. 

In comparison with model-based methods, RL is capable of learning a control law from data samples using much less model information \cite{Sutton2018MIT}. Hence, it is more promising in controlling systems subject to massive uncertainties and disturbances as ASVs  \cite{Woo2019OE, Shi2019TNNLS, Meyer2019arXiv_ASV, Zhou2019Access}, given the sufficiency and good quality of collected data. Nevertheless, it is challenging for model-free RL to ensure closed-loop stability, though some research attempts have been made \cite{han2019hinf}.  It implies that the learned control law must be re-trained, once some changes happen to the environment or the reference trajectory (i.e. in \cite{Shi2019TNNLS}, the authors conducted two independent training procedures for two different reference trajectories.). Model-based RL is possible to learn a control law which ensures the closed-loop stability by introducing a Lyapunov constraint into the  objective function of the policy improvement according to the latest research \cite{Berkenkamp2017NIPS}.  However, the model-based RL with stability guarantees requires an admissible control law --- a control law which makes the original system asymptotically stable --- for the initialization. Both the Lyapunov candidate function and complete system dynamics are assumed to be Lipschitz continuous with known Lipschitz constants for the construction of the Lyapunov constraint. It is challenging to find the Lipschitz constant of an uncertain system subject to unknown environmental disturbances. Therefore, the introduced Lyapunov constraint function is restrictive, as it is established based on the worst-case consideration \cite{Berkenkamp2017NIPS}. 

With the consideration of merits and limitations of existing RL methods, we propose a novel learning-based control algorithm for uncertain ASVs by combining a conventional control method with deep RL in this paper. The proposed learning-based control design, therefore, consists of two components: a baseline control law stabilizing a nominal ASV system and a deep RL control law used to compensate for system uncertainties and disturbances. Such a design method has several advantages over both conventional model-based methods and pure deep RL methods. First of all, in relation to the ``model-free'' feature of deep RL, we can learn a control law directly to compensate for uncertainties and disturbances without exploiting their structures, boundaries, or frequencies. In the new design, uncertainties and disturbances are not necessarily matched, as deep RL seeks a control law like direct adaptive control \cite{Sutton1992CSM}. The learning process is performed offline using historical data and the stochastic gradient descent technique, so there is no need for the ASV system be persistently excited when the learned control law is implemented. Second, the overall learned control law can provide stability guarantees, if the baseline control law is able to stabilize the ASV system at least locally. Without introducing a restrictive Lyapunov constraint into the objective function of the policy improvement in RL as in \cite{Berkenkamp2017NIPS}, we can avoid exploiting the Lipschitz constant of the overall system and potentially produce less conservative results. Lastly, the proposed design is potentially more sample efficient than a RL algorithm learning from scratch -- that is, fewer data samples are needed for the training process. In RL, a system learns from mistakes so a lot of trial and error is demanded. Fortunately, in our proposed design, the baseline control which can stabilize the overall system under no disturbances, can help to exclude unnecessary mistakes, so it provides a good starting point for the RL training. A similar idea is used in \cite{Hwangbo2017RAL} for the control of quadrotors. The baseline control in \cite{Hwangbo2017RAL} is constructed based on the full accurate model of a quadrotor system, but stability analysis is missing.

The rest of the paper is organized as follows. In Section \ref{sec:Prelim}, we present the ASV dynamics,  basic concepts of reinforcement learning, and problem formulation. Section \ref{sec:DeepRLControlDesign} describes the proposed methodology, including deep reinforcement learning design, training setup, and  algorithm analysis. In Section \ref{sec:Sim&Exp}, numerical  simulation results are provided to show the efficiency of the proposed design. Conclusion remarks are given in Section \ref{sec:Concl}.

\section{Problem formulation} \label{sec:Prelim}

The full dynamics of autonomous surface vehicles (ASVs) have six degrees of freedom (DOF), including three linear motions and three rotational motions \cite{Fossen2011Handbook}. In most scenarios, we are interested in controlling the horizontal dynamics of (ASVs) \cite{Skjetnea2005Auto,Peng2013}. We, therefore, ignore the vertical, rolling, and pitching motions of ASVs by default in this paper.

Let $x$ and $y$ be the horizontal position coordinates of an ASV in the inertial frame and $\psi$ the heading angle as shown in Figure \ref{fig:ship}. In the body frame (c.f., Figure \ref{fig:ship}), we use $u$ and $v$ to represent the linear velocities in surge ($x$-axis) and sway ($y$-axis), respectively. The heading angular rate is denoted by $r$.  The general 3-DOF nonlinear dynamics of an ASV can be expressed as
\begin{equation}
    \left\{
    \begin{array}{rcl}
       \dot{\boldsymbol{\eta}} &=& \boldsymbol{R}\left(\boldsymbol{\eta}\right)\boldsymbol{\nu}  \\
       \boldsymbol{M}\dot{\boldsymbol{\nu}}+\left(\boldsymbol{C}\left(\boldsymbol{\nu} \right)+\boldsymbol{D}\left(\boldsymbol{\nu} \right)\right)\boldsymbol{\nu} +\boldsymbol{G}\left(\boldsymbol{\nu} \right) &=& \boldsymbol{\tau}
    \end{array}
    \right. \label{eq:ASV_Dyn}
\end{equation}
where $\boldsymbol{\eta}=\left[x, y, \psi\right]^T\in\mathbb{R}^{3}$ is a generalized coordinate vector,  $\boldsymbol{\nu}=\left[u,v,r\right]^T\in\mathbb{R}^{3}$ is the speed vector, $\boldsymbol{M}$ is the inertia matrix, $\boldsymbol{C}\left(\boldsymbol{\nu} \right)$ denotes the
matrix of Coriolis and centripetal terms, $\boldsymbol{D}\left(\boldsymbol{\nu}\right)$ is the damping matrix, $ \boldsymbol{\tau}\in\mathbb{R}^{3}$ represents the control forces and moments, $\boldsymbol{G}\left(\boldsymbol{\nu} \right)=\left[\boldsymbol{g}_{1}\left(\boldsymbol{\nu} \right), \boldsymbol{g}_{2}\left(\boldsymbol{\nu} \right), \boldsymbol{g}_{3}\left(\boldsymbol{\nu} \right)\right]^T\in\mathbb{R}^{3}$ denotes unmodeled dynamics due to gravitational and buoyancy forces and moments \cite{Fossen2011Handbook},  and $\boldsymbol{R}$ is a rotation matrix given by  
\begin{equation*}
    \boldsymbol{R}=\left[\begin{array}{ccc}
    \cos{\psi} & -\sin{\psi} & 0 \\
    \sin{\psi} & \cos{\psi} &  0 \\
    0 & 0 & 1
    \end{array}\right] 
\end{equation*}
\begin{figure}[tbp]
    \centering
    \includegraphics[width=0.375\textwidth]{./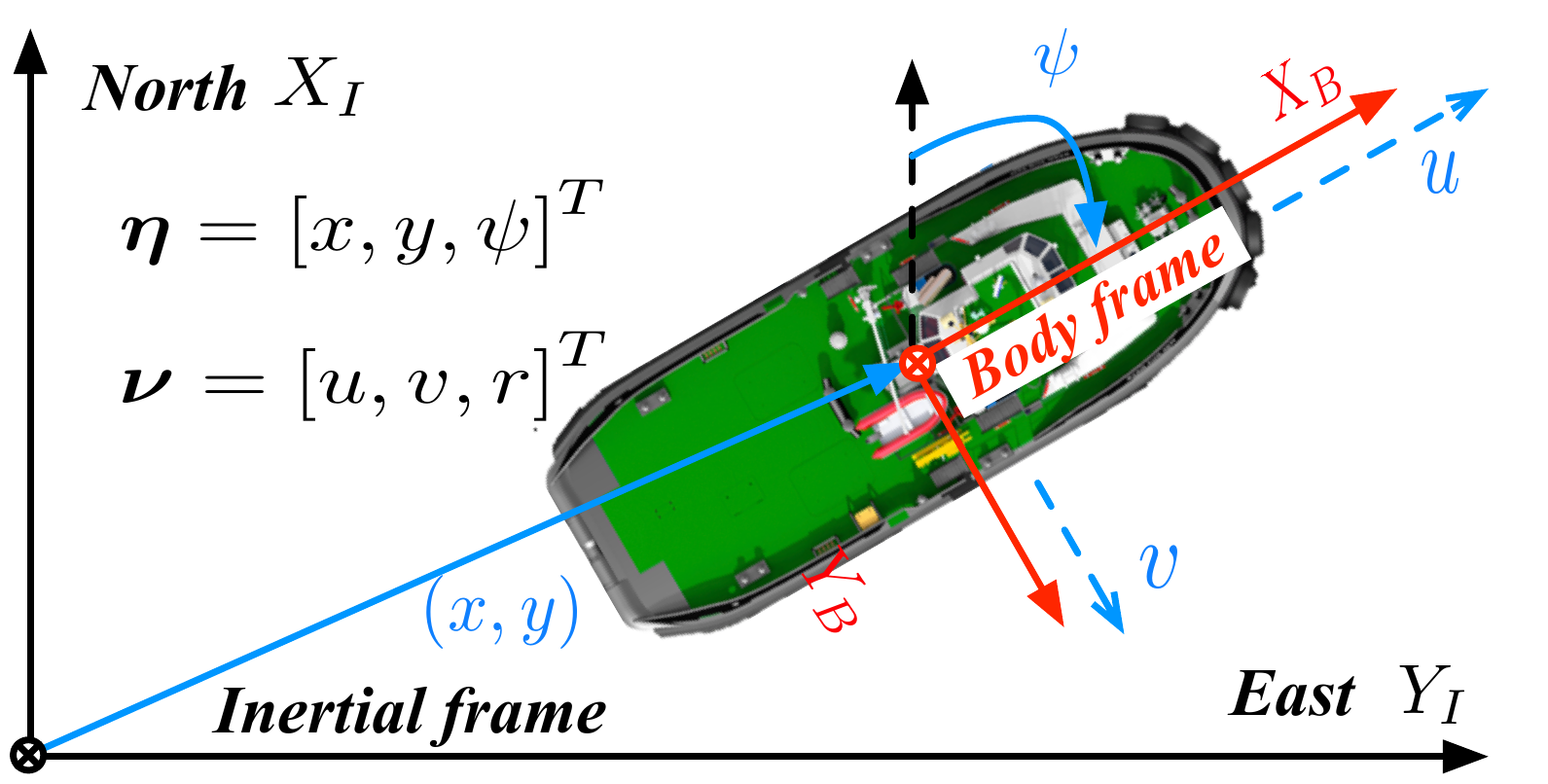}
    \caption{Coordinate systems of an autonomous surface vehicle}
    \label{fig:ship}
\end{figure}

The inertia matrix $\boldsymbol{M}=\boldsymbol{M}^T>0$ is 
\begin{equation}
    \boldsymbol{M}=[M_{ij}]=\left[\begin{array}{ccc}
         M_{11}&   0    & 0  \\
         0     & M_{22} & M_{23} \\
         0     &  M_{32}& M_{33}
    \end{array}\right]
\end{equation}
where $M_{11}=m-X_{\dot{u}}$, $M_{22}=m-Y_{\dot{v}}$,  $M_{33}=I_z-N_{\dot{r}}$, and $M_{32}=M_{23}=mx_g-Y_{\dot{r}}$. The matrix $\boldsymbol{C}\left(\boldsymbol{\nu} \right)=-\boldsymbol{C}^T\left(\boldsymbol{\nu} \right)$ is 
\begin{equation}
    \boldsymbol{C}=[C_{ij}]=\left[\begin{array}{ccc}
         0 &   0    & C_{13}\left(\boldsymbol{\nu} \right) \\
         0 &   0 & C_{23}\left(\boldsymbol{\nu} \right) \\
         -C_{13}\left(\boldsymbol{\nu} \right)     & -C_{23}\left(\boldsymbol{\nu} \right) & 0
    \end{array}\right]
\end{equation}
where $C_{13}\left(\boldsymbol{\nu} \right)=-M_{22}v-M_{23}r$, $C_{23}\left(\boldsymbol{\nu} \right)=-M_{11}u$. The damping matrix $\boldsymbol{D}\left(\boldsymbol{\nu}\right)$ is 
\begin{equation}
    \boldsymbol{D}\left(\boldsymbol{\nu}\right)=[D_{ij}]=
    \left[\begin{array}{ccc}
         D_{11}\left(\boldsymbol{\nu} \right) &   0    & 0\\
         0     & D_{22}\left(\boldsymbol{\nu} \right)& D_{23}\left(\boldsymbol{\nu} \right) \\
         0 & D_{32}\left(\boldsymbol{\nu} \right) & D_{33}\left(\boldsymbol{\nu} \right) 
    \end{array}\right]
\end{equation}
where $D_{11}\left(\boldsymbol{\nu} \right)=-X_u-X_{\vert u\vert u}\vert u\vert -X_{uuu}u^2$, $D_{22}\left(\boldsymbol{\nu} \right)=-Y_v-Y_{\vert v\vert v}\vert v\vert -Y_{\vert r\vert v}\vert r\vert $, $D_{23}\left(\boldsymbol{\nu} \right)=-Y_r-Y_{\vert v\vert r}\vert v\vert -Y_{\vert r\vert r}\vert r\vert $, 
$D_{32}\left(\boldsymbol{\nu} \right)=-N_v-N_{\vert v\vert v}\vert v\vert -N_{\vert r\vert v}\vert r\vert $, $D_{33}\left(\boldsymbol{\nu} \right)=-N_r-N_{\vert v\vert r}\vert v\vert -N_{\vert r\vert r}\vert r\vert $, and $X_{\left(\cdot\right)}$, $Y_{\left(\cdot\right)}$, and $N_{\left(\cdot\right)}$ are hydrodynamic coefficients whose definitions can be found in \cite{Fossen2011Handbook}. Accurate numerical models of the nonlinear dynamics (\ref{eq:ASV_Dyn}) are rarely available. Major uncertainty sources come from $\boldsymbol{M}$, $\boldsymbol{C}\left(\boldsymbol{\nu} \right)$, and $\boldsymbol{D}\left(\boldsymbol{\nu}\right)$ due to hydrodynamics, and  $\boldsymbol{G}\left(\boldsymbol{\nu} \right)$ due to gravitational and buoyancy forces and moments.  The objective of this work is to design a control scheme capable of handling these uncertainties.

\section{Model-Reference Reinforcement Learning Control} \label{sec:MR_DeepRL}
Let $\boldsymbol{x}=\left[\boldsymbol{\eta}^T, \boldsymbol{\nu}^T\right]^T$ and $\boldsymbol{u}=\boldsymbol{\tau}$, so (\ref{eq:ASV_Dyn})  can be rewritten as 
\begin{equation}
    \dot{\boldsymbol{x}} = \left[\begin{array}{cc}
    0 &  \boldsymbol{R}\left(\boldsymbol{\eta}\right) \\
    0 & \boldsymbol{A}\left(\boldsymbol{\nu}\right)
    \end{array}\right]\boldsymbol{x} + \left[\begin{array}{c}
    0 \\
    \boldsymbol{B}
    \end{array}\right]\boldsymbol{u} \label{eq:ASV_Dyn2}
\end{equation}
where $\boldsymbol{A}\left(\boldsymbol{\nu}\right)=\boldsymbol{M}^{-1}\left(\boldsymbol{C}\left(\boldsymbol{\nu} \right)+\boldsymbol{D}\left(\boldsymbol{\nu} \right)\right)$, and $\boldsymbol{B}=\boldsymbol{M}^{-1}$. Assume an accurate model (\ref{eq:ASV_Dyn2}) is not available, but it is possible to get a nominal model expressed as
\begin{equation}
    \dot{\boldsymbol{x}}_m = \left[\begin{array}{cc}
    0 &  \boldsymbol{R}\left(\boldsymbol{\eta}\right) \\
    0 & \boldsymbol{A}_m
    \end{array}\right]\boldsymbol{x}_m + \left[\begin{array}{c}
    0 \\
    \boldsymbol{B}_m
    \end{array}\right]\boldsymbol{u}_m \label{eq:ASV_Dyn_nom}
\end{equation}
where $\boldsymbol{A}_m$ and $\boldsymbol{B}_m$ are the known system matrices.  Assume that there exists a control law $\boldsymbol{u}_m$ allowing the states of the nominal system (\ref{eq:ASV_Dyn_nom}) to converge to a reference signal $\boldsymbol{x}_r$, i.e., $\Vert\boldsymbol{x}_m-\boldsymbol{x}_r\Vert_2\to{0}$ as $t\to\infty$.

The objective is to design a control law allowing the state of (\ref{eq:ASV_Dyn2}) to track state trajectories of the nominal model (\ref{eq:ASV_Dyn_nom}).  As shown in Figure \ref{fig:Cntrlblock}, the overall control law for the ASV system (\ref{eq:ASV_Dyn2}) has the following expression.
\begin{equation}
    \boldsymbol{u} =\boldsymbol{u}_b + \boldsymbol{u}_{l}\label{eq:entireCntrl}
\end{equation}
where $\boldsymbol{u}_b$ is a baseline control designed based on (\ref{eq:ASV_Dyn_nom}), and $\boldsymbol{u}_{l}$ is a control policy from the deep reinforcement learning module shown in Figure \ref{fig:Cntrlblock}. The baseline control $\boldsymbol{u}_b$ is employed to ensure some basic performance, (i.e., local stability), while $\boldsymbol{u}_{l}$ is introduced to compensate for all system uncertainties. The baseline control $\boldsymbol{u}_b$ in (\ref{eq:entireCntrl}) can be designed based on any existing model-based method based on the nominal model (\ref{eq:ASV_Dyn_nom}). Hence, we ignore the design process of $\boldsymbol{u}_b$, and mainly focus on the development of $\boldsymbol{u}_{l}$ based on reinforcement learning. 
\begin{figure}
    \centering
    \includegraphics[width=0.45\textwidth]{./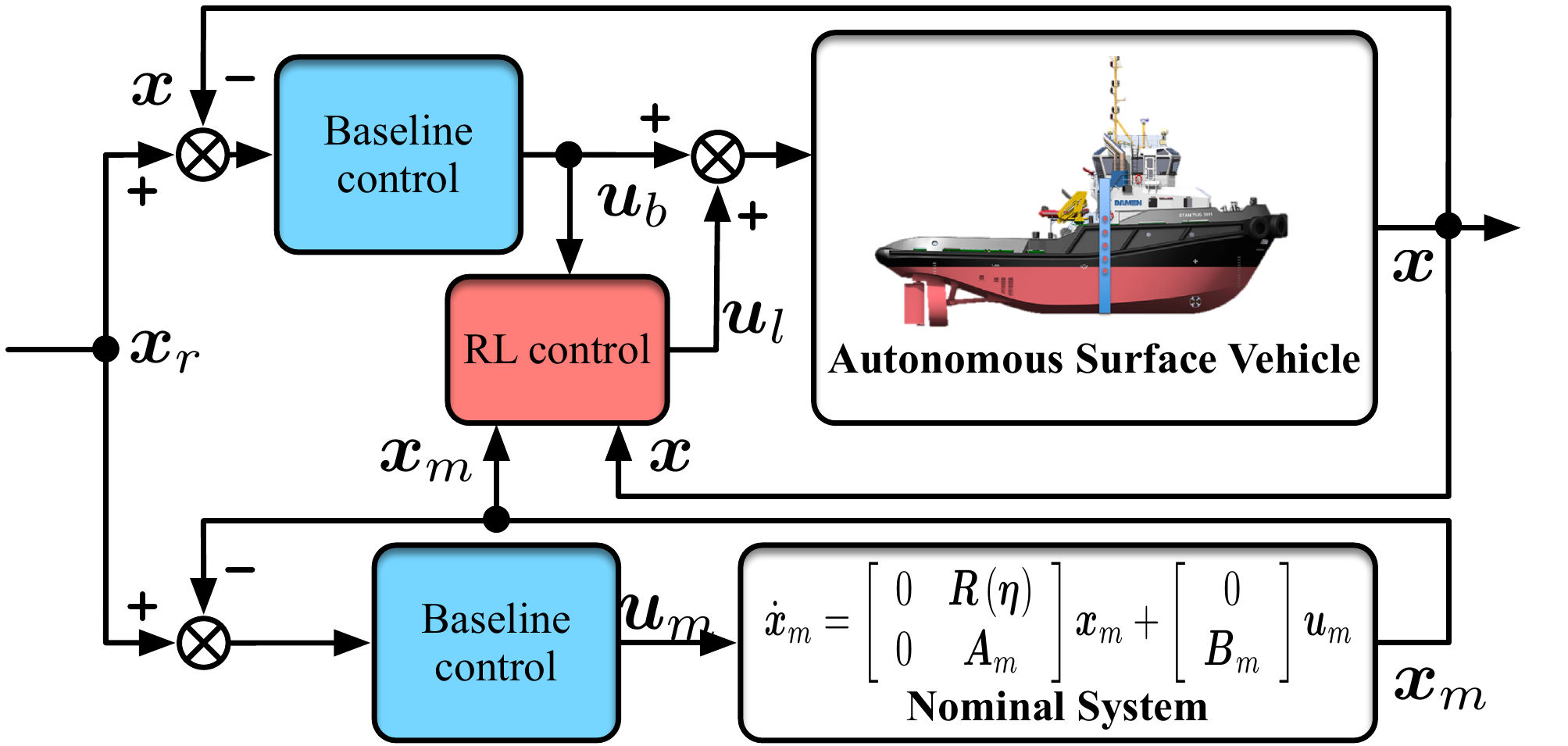}
    \caption{Model-reference reinforcement learning control}
    \label{fig:Cntrlblock}
\end{figure}

\subsection{Reinforcement learning} \label{subsec:RL}
In RL, system dynamics are characterized using a Markov decision process  denoted by a tuple $\mathcal{MDP}:=\big\langle \mathcal{S},\;\mathcal{U},\;\mathcal{P},\;R,\;\gamma\big\rangle$, where $\mathcal{S}$ is the state space, $\mathcal{U}$ specifies the action/input space, $\mathcal{P}:\mathcal{S}\times\mathcal{U}\times\mathcal{S}\rightarrow \mathbb{R}$ defines a transition probability, $R:\mathcal{S}\times\mathcal{U}\rightarrow \mathbb{R}$ is a reward function, and $\gamma\in\left[0,\;1\right]$ is a discount factor. A policy in RL, denoted by $\boldsymbol{\pi}\left(\boldsymbol{u}_l|\boldsymbol{s}\right)$, is the probability of choosing an action $\boldsymbol{u}_l\in\mathcal{U}$ at a state $\boldsymbol{s}\in\mathcal{S}$. Note that the state vector $\boldsymbol{s}$ contains all available signals affecting the reinforcement learning control $\boldsymbol{u}_l$. In this paper, such signals include $\boldsymbol{x}$, $\boldsymbol{x}_{m}$, $\boldsymbol{x}_{r}$, and $\boldsymbol{u}_{b}$, where $\boldsymbol{x}_{m}$ performs like a target state for system (\ref{eq:ASV_Dyn2}) and $\boldsymbol{u}_{b}$ is a function of $\boldsymbol{x}$ and $\boldsymbol{x}_{r}$. Hence, we choose $\boldsymbol{s}=\left\{\boldsymbol{x}_{m},\boldsymbol{x}, \boldsymbol{u}_{b}\right\}$.

Reinforcement learning uses data samples, so it is assumed that we can sample input and state data from system (\ref{eq:ASV_Dyn2}) at discrete time steps. Without loss of generality, we define $\boldsymbol{x}_{t}$, $\boldsymbol{u}_{b,t}$, and $\boldsymbol{u}_{l,t}$ as the ASV state, the baseline control action, and the control action from the reinforcement learning at the time step $t$, respectively. The state signal $\boldsymbol{s}$ at the time step $t$ is, therefore, denoted by  $\boldsymbol{s}_t=\left\{\boldsymbol{x}_{m,t},\boldsymbol{x}_{t}, \boldsymbol{u}_{b,t}\right\}$. The sample time step is assumed to be fixed and denoted by $\delta t$.  

For each state $\boldsymbol{s}_t$, we define a value function $V_{\boldsymbol{\pi}}\left(\boldsymbol{s}_t\right)$ as an expected accumulated return described as
\begin{equation}
V_{\boldsymbol{\pi}}
=\sum_{t}^{\infty}\sum_{\boldsymbol{u}_{l,t}}\boldsymbol{\pi}\left(\boldsymbol{u}_{l,t}|\boldsymbol{s}_t\right)\sum_{\boldsymbol{s}_{t+1}}\mathcal{P}_{t+1|t}\big(R_t+\gamma V_{\boldsymbol{\pi}}(\boldsymbol{s}_{t+1}) \big) 
\label{eq:V_Func}
\end{equation}
where $R_{t}=R(\boldsymbol{s}_t,\boldsymbol{u}_{l,t})$ and $\mathcal{P}_{t+1|t}=\mathcal{P}\left(\boldsymbol{s}_{t+1}\left|\boldsymbol{s}_t,\boldsymbol{u}_{l,t}\right.\right)$. The action-value function (a.k.a., Q-function) is defined to be 
\begin{equation}
Q_{\boldsymbol{\pi}}\left(\boldsymbol{s}_t,\boldsymbol{u}_{l,t}\right)=R_t+\gamma \sum_{\boldsymbol{s}_{t+1}}\mathcal{P}_{t+1|t}V_{\boldsymbol{\pi}}(\boldsymbol{s}_{t+1}) \label{eq: Action-Value Func}
\end{equation}
In our design, we aim to allow system (\ref{eq:ASV_Dyn2}) to track the nominal system (\ref{eq:ASV_Dyn_nom}), so $R_t$ is defined as
\begin{equation}
   R_t = -\left(\boldsymbol{x}_{t}-\boldsymbol{x}_{m,t}\right)^T\boldsymbol{G} \left(\boldsymbol{x}_{t}-\boldsymbol{x}_{m,t}\right)-\boldsymbol{u}_{l, t}^T\boldsymbol{H}\boldsymbol{u}_{l, t} \label{eq:ASV_Reward}
\end{equation}
where $\boldsymbol{G} \geq 0$ and $\boldsymbol{H}>0$ are positive definite matrices.

The objective of the reinforcement learning is to find an optimal policy $\boldsymbol{\pi}*$ to maximize the state-value function $V_\pi(s_t)$ or the action-value function $Q_{\boldsymbol{\pi}}\left(\boldsymbol{s}_t,\;\boldsymbol{u}_{l,t}\right)$, $\forall \boldsymbol{s}_t\in\mathcal{S} $, namely, 
\begin{align}
    \boldsymbol{\pi}^*&=\arg\max_{\boldsymbol{\pi}}Q_{\boldsymbol{\pi}}\left(\boldsymbol{s}_t,\boldsymbol{u}_{l, t}\right) \nonumber \\
    &= \arg\max_{\boldsymbol{\pi}} \left(R_t+\gamma \sum_{\boldsymbol{s}_{t+1}}\mathcal{P}_{t+1|t}V_{\boldsymbol{\pi}}(\boldsymbol{s}_{t+1})\right)\label{eq:RL_Obj}
\end{align}

\section{Deep Reinforcement Learning Control Design} \label{sec:DeepRLControlDesign}
In this section, we will present a deep reinforcement learning algorithm for the design of  $\boldsymbol{u}_{l}$ in (\ref{eq:entireCntrl}), where both the control law $\boldsymbol{u}_{l}$ and the Q-function $Q_{\boldsymbol{\pi}}\left(\boldsymbol{s}_t,\boldsymbol{u}_{l, t}\right)$ are approximated using deep neural networks.  


The deep reinforcement learning control in this paper is developed based on the soft actor-critic (SAC) algorithm which provides both sample efficient learning and convergence \cite{Haarnoja2018SAC1}. 
In SAC, an entropy term is added to the objective function in (\ref{eq:RL_Obj}) to regulate the exploration performance at the training stage. The objective of  (\ref{eq:RL_Obj}) is thus rewritten as
\begin{align}
     \boldsymbol{\pi}^*=& \arg\max_{\boldsymbol{\pi}} \left(R_t+\gamma \mathbb{E}_{\boldsymbol{s}_{t+1}}\left[V_{\boldsymbol{\pi}}(\boldsymbol{s}_{t+1}) \right.\right.\nonumber \\
     &\big.\left.+\alpha\mathcal{H}\left(\boldsymbol{\pi}\left(\boldsymbol{u}_{l,t+1}|\boldsymbol{s}_{t+1}\right)\right)\right]\big)\label{eq:SAC_Obj}
\end{align}
where  $\mathbb{E}_{\boldsymbol{s}_{t+1}}\left[\cdot\right]=\sum_{\boldsymbol{s}_{t+1}}\mathcal{P}_{t+1|t}\left[\cdot\right]$ is an expectation operator,  $\mathcal{H}\left(\boldsymbol{\pi}\left(\boldsymbol{u}_{l,t}|\boldsymbol{s}_{t}\right)\right)=-\sum_{\boldsymbol{u}_{l,t}}\boldsymbol{\pi}\left(\boldsymbol{u}_{l,t}|\boldsymbol{s}_t\right)\ln\left(\boldsymbol{\pi}\left(\boldsymbol{u}_{l,t}|\boldsymbol{s}_{t}\right)\right)=-\mathbb{E}_{\boldsymbol{\pi}}\left[\ln\left(\boldsymbol{\pi}\left(\boldsymbol{u}_{l,t}|\boldsymbol{s}_{t}\right)\right)\right]$ is the entropy of the policy, and $\alpha$ is a temperature parameter. 


Training of SAC repeatedly executes policy evaluation and policy improvement. In the policy evaluation, a soft Q-value is computed by applying a Bellman operation $Q_{\boldsymbol{\pi}}\left(\boldsymbol{s}_t,\boldsymbol{u}_{l, t}\right)=\mathcal{T}^{\boldsymbol{\pi}}Q_{\boldsymbol{\pi}}\left(\boldsymbol{s}_t,\boldsymbol{u}_{l, t}\right)$ where
\begin{align}
    \mathcal{T}^{\boldsymbol{\pi}}Q_{\boldsymbol{\pi}}\left(\boldsymbol{s}_t,\boldsymbol{u}_{l, t}\right)&=R_t+\gamma \mathbb{E}_{\boldsymbol{s}_{t+1}}\left\{\mathbb{E}_{\boldsymbol{\pi}}\left[Q_{\boldsymbol{\pi}}\left(\boldsymbol{s}_{t+1},\boldsymbol{u}_{l,t+1}\right) \right. \right. \nonumber \\
    &\left.\left.-\alpha\ln\left(\boldsymbol{\pi}\left(\boldsymbol{u}_{l,t+1}|\boldsymbol{s}_{t+1}\right)\right)\right]\right\} \label{eq:BellmanOp}
\end{align}
In the policy improvement, the policy is updated by 
\begin{equation}
    \boldsymbol{\pi}_{new} = \arg \min_{\boldsymbol{\pi}'}\mathscr{D}_{KL}\left(\boldsymbol{\pi}'\left(\cdot\vert\boldsymbol{s}_{t}\right) \Big\Vert {Z^{{\boldsymbol{\pi}}_{ old}}}{e^{Q^{{\boldsymbol{\pi}}_{ old}}\left(\boldsymbol{s}_{t}, \cdot\right)}}\right) \label{eq:KL_pi}
\end{equation}
where $\boldsymbol{\pi}_{ old}$ denotes the policy from the last update, $Q^{{\boldsymbol{\pi}}_{ old}}$ is the Q-value of $\boldsymbol{\pi}_{ old}$. $\mathscr{D}_{KL}$ denotes the Kullback-Leibler (KL) divergence, and $Z^{{\pi}_{old}}$ is a normalization factor. Via mathematical manipulations, the objective for the policy improvement is transformed into 
\begin{equation}
    \boldsymbol{\pi}* = \arg \min_{\boldsymbol{\pi}} \mathbb{E}_{{\boldsymbol{\pi}}}\Big[\alpha\ln\left(\boldsymbol{\pi}\left(\boldsymbol{u}_{l,t}|\boldsymbol{s}_{t}\right)\right)-Q\left(\boldsymbol{s}_{t}, \boldsymbol{u}_{l, t}\right)\Big] \label{eq:PI_Q}
\end{equation}
More details on how (\ref{eq:PI_Q}) is obtained can be found in \cite{Haarnoja2018SAC1, Haarnoja2018SAC2}. As shown in Figure \ref{fig:ACNN}, both the policy $\boldsymbol{\pi}\left(\boldsymbol{u}_{l,t}|\boldsymbol{s}_{t}\right)$ and value function $Q_{\boldsymbol{\pi}}\left(\boldsymbol{s}_{t}, \boldsymbol{u}_{l, t}\right)$ will be parameterized using fully connected multiple layer perceptrons (MLP) with 'ReLU' nonlinearities as the activation functions. The 'ReLU' function is defined as 
\begin{equation*}
    \underline{relu}\left(z\right) = \max\left\{z,0\right\}
\end{equation*}
The ``ReLU'' activation function outperforms other activation functions like sigmoid functions \cite{Dahl2013ICASSP}. For a vector $z=[z_1,\ldots, z_n]^T\in\mathbb{R}^{n}$, there exists $\underline{relu}\left(z\right)=[\underline{relu}\left(z_1\right),\ldots, \underline{relu}\left(z_n\right)]^T$. Hence, a MLP with 'ReLU' as the activation functions and one hidden layer is expressed as 
\begin{equation*}
\underline{MLP}\left(z\right) = W_1 \left[\underline{relu}\left(W_0\left[z^T,\; 1\right]\right)^T,\; 1\right]^T
\end{equation*}
where $\left[z^T,\; 1\right]^T$ is a vector composed of $z$ and $1$, and $W_0$ and $W_1$ with appropriate dimensions are weight matrices to be trained. For the simplicity, we use  $W=\left\{W_0,\;W_1\right\}$ to represent the set of parameters to be trained.
\begin{figure}[tbp]
    \centering
    \includegraphics[width=0.325\textwidth]{./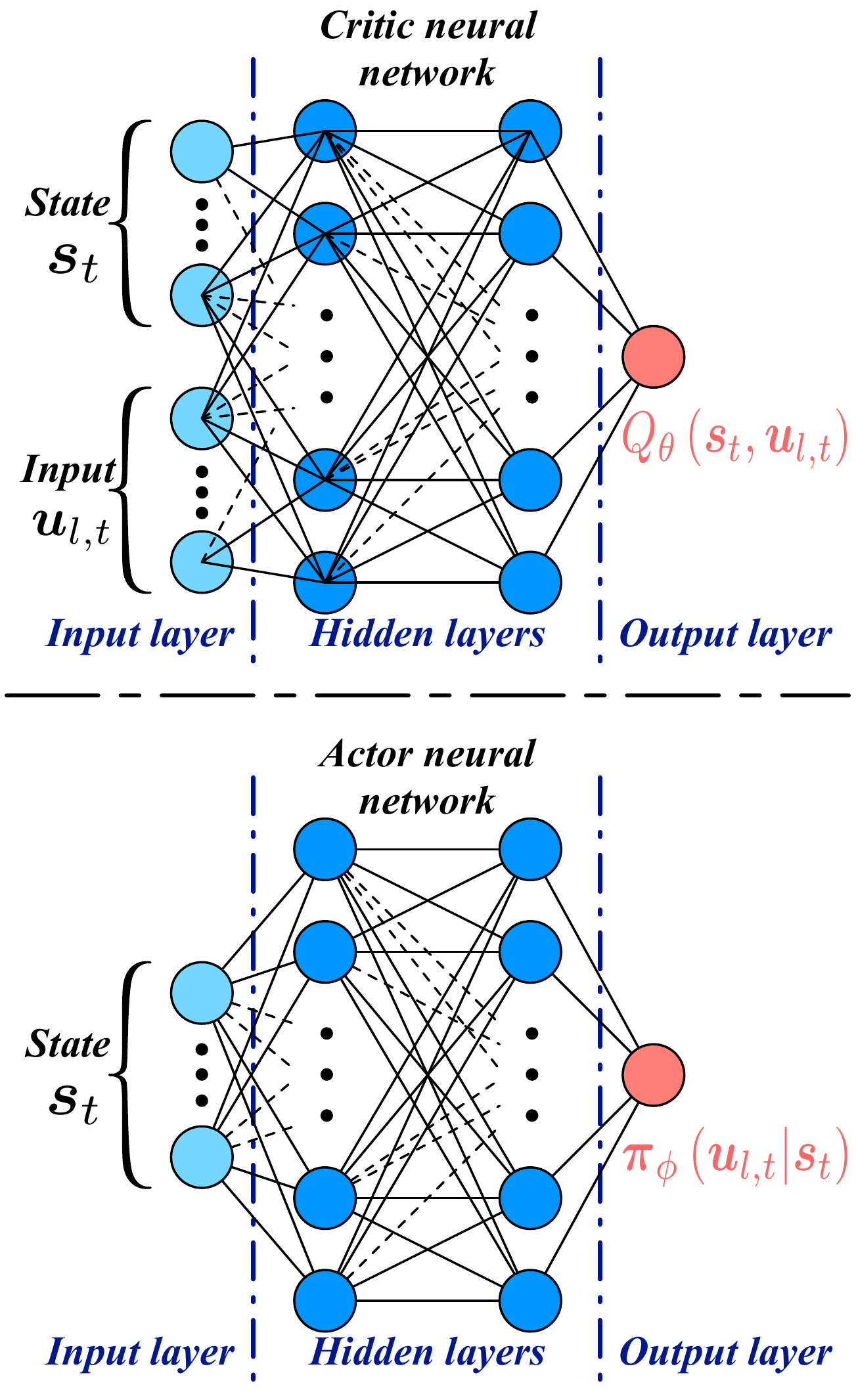}
    \caption{Approximation of $Q_{\theta}$ and $\boldsymbol{\pi}_{{\phi}}$ using MLP}
    \label{fig:ACNN}
\end{figure}

In this paper,  the Q-function is parameterized using $\theta$ and denoted by  $Q_{\theta}\left(\boldsymbol{s}_{t}, \boldsymbol{u}_{l, t}\right)$. The parameterized policy is denoted by $\boldsymbol{\pi}_{{\phi}}\left(\boldsymbol{u}_{l, t}\vert \boldsymbol{s}_{t}\right)$, where $\phi$ is the parameter set to be trained. Note that both $\theta$ and $\phi$ are a set of parameters whose dimensions are determined by the deep neural network setup. For example, if $Q_{\theta}$ is represented by a MLP with $K$ hidden layers and $L$ neurons for each hidden layers, the parameter set $\theta$ is $\theta =\left\{\theta_0, \theta_1, \ldots, \theta_K\right\}$ with $\theta_0\in\mathbb{R}^{\left(dim_{\boldsymbol{s}}+dim_{\boldsymbol{u}}+1\right)\times L} $, $\theta_K\in\mathbb{R}^{\left(L+1\right)}$, and $\theta_i\in\mathbb{R}^{1\times{\left(L\right)}\times {\left(L+1\right)}}$ for $1\leq i\leq K-1$, where $dim_{\boldsymbol{s}}$ denotes the dimension of the state $\boldsymbol{s}$ and $dim_{\boldsymbol{u}}$ is the dimension of the input $\boldsymbol{u}_l$.  The deep neural network for $Q_{\theta}$ is called critic, while the one for $\boldsymbol{\pi}_{{\phi}}$ is called actor.

\begin{figure}[tbp]
    \centering
    \includegraphics[width=0.4\textwidth]{./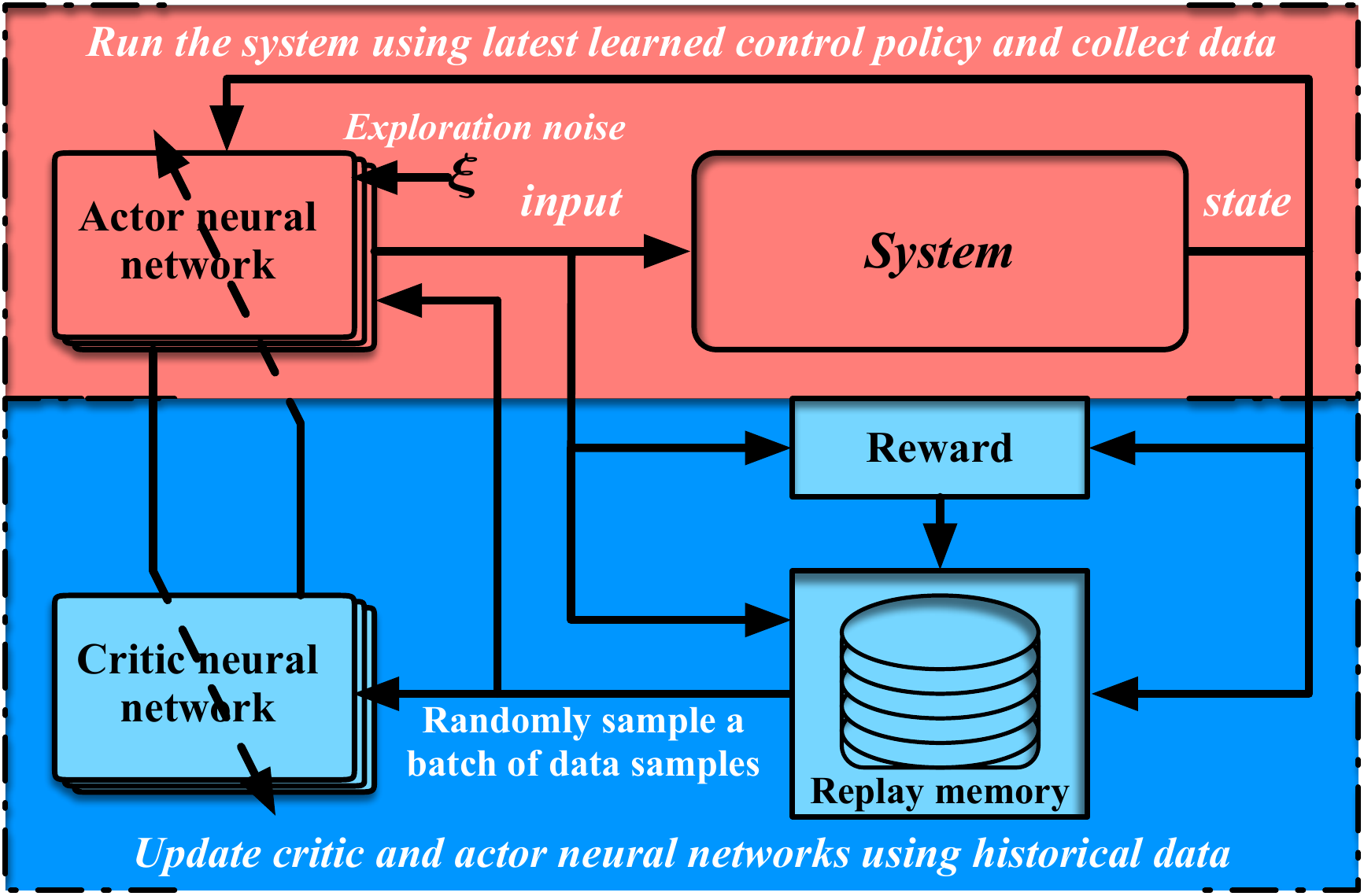}
    \caption{Offline training process of deep reinforcement learning}
    \label{fig:TrainRL}
\end{figure}
\subsection{Training setup}
The algorithm training process is illustrated in Figure  \ref{fig:TrainRL}. The whole training process will be offline. We repeatedly run the system (\ref{eq:ASV_Dyn2}) under a trajectory tracking task. At each time step $t+1$, we collect data samples, such as an input from the last time step  $\boldsymbol{u}_{l, t}$, a state from the last time step $\boldsymbol{s}_{t}$, a reward $R_t$, and a current state $\boldsymbol{s}_{t+1}$. Those historical data will be stored as a tuple $\left(\boldsymbol{s}_{t}, \boldsymbol{u}_{l, t}, R_t, \boldsymbol{s}_{t+1}\right)$ at a replay memory $\mathcal{D}$ \cite{Mnih2015Nature}. At each policy evaluation or improvement step, we randomly sample a batch of historical data, $\mathcal{B}$, from the replay memory $\mathcal{D}$ for the training of the parameters $\theta$ and $\phi$. Starting the training, we apply the baseline control policy $\boldsymbol{u}_b$ to an ASV system to collect the initial data  $\mathcal{D}_0$ as shown in Algorithm \ref{alg:SAC_Control}. The initial data set $\mathcal{D}_0$ is used for the initial fitting of Q-value functions. When the initialization is over, we execute both $\boldsymbol{u}_b$ and the latest updated reinforcement learning policy $\boldsymbol{\pi}_{{\phi}}\left(\boldsymbol{u}_{l, t}\vert \boldsymbol{s}_{t}\right)$ to run the ASV system.

At the policy evaluation step, the parameters $\theta$ are trained to minimize the following Bellman residual.
\begin{equation}
J_{Q}\left(\theta\right)= \mathbb{E}_{\left(\boldsymbol{s}_{t}, \boldsymbol{u}_{l, t}\right)\sim\mathcal{D}}\left[\frac{1}{2}\left(Q_{\theta}\left(\boldsymbol{s}_{t}, \boldsymbol{u}_{l, t}\right)-Y_{target}\right)^2\right] \label{eq:JQ_theta}
\end{equation}
where $\left(\boldsymbol{s}_{t}, \boldsymbol{u}_{l, t}\right)\sim\mathcal{D}$ implies that we randomly pick data samples $\left(\boldsymbol{s}_{t}, \boldsymbol{u}_{l, t}\right)$ from a replay memory $\mathcal{D}$, and 
\begin{equation*}
  Y_{target}=R_t+\gamma\mathbb{E}_{\boldsymbol{s}_{t+1}}\big[\mathbb{E}_{\boldsymbol{\pi}}\left[Q_{\bar{\theta}}\left(\boldsymbol{s}_{t+1}, \boldsymbol{u}_{l,t+1}\right)-\alpha\ln\left(\boldsymbol{\pi}_{\phi}\right)\right] \big]  
\end{equation*} 
where $\bar{\theta}$ is the target parameter which will be updated slowly. Applying a stochastic gradient descent technique (ADAM \cite{Kingma2014Adam} in this paper) to (\ref{eq:JQ_theta}) on a data batch $\mathcal{B}$ with a fixed size, we obtain 
\begin{align*}
    \nabla_{\theta} J_{Q}\left(\theta\right) &= \sum  \frac{\nabla_{\theta}Q_{\theta}}{\vert\mathcal{B}\vert} \Big(Q_{\theta}\left(\boldsymbol{s}_{t}, \boldsymbol{u}_{l, t}\right)-Y_{target}\Big) 
\end{align*}
where $\vert\mathcal{B}\vert$ is the batch size. 

At the policy improvement step, the objective function defined in (\ref{eq:PI_Q}) is represented  using data samples from the replay memory $\mathcal{D}$ as given in (\ref{eq:PI_Phi}).  
\begin{align}
    J_{{\pi}}\left(\phi\right)&=\mathbb{E}_{\left(\boldsymbol{s}_{t}, \boldsymbol{u}_{l, t}\right)\sim\mathcal{D}}\Big(\alpha\ln(\pi_{{\phi}})  -Q_{{\theta}}\left(\boldsymbol{s}_{t}, \boldsymbol{u}_{l, t}\right)\Big) \label{eq:PI_Phi}
\end{align}
Parameter $\phi$ is trained to minimize (\ref{eq:PI_Phi})  using a stochastic gradient descent technique.  At the training stage, the actor neural network is expressed as
\begin{equation}
    {\boldsymbol{u}}_{l,\phi} = \bar{\boldsymbol{u}}_{l,\phi}+\boldsymbol{\sigma}_{\phi}\odot\boldsymbol{\xi} \label{eq:ActorNN}
\end{equation}
where  $\bar{\boldsymbol{u}}_{l,\phi}$ represents the control law to be implemented in the end, $\boldsymbol{\sigma}_{\phi}$ denotes the standard deviation of the exploration noise, $\boldsymbol{\xi}\sim \mathscr{N}\left(0, \boldsymbol{I}\right)$ is the exploration noise with $\mathscr{N}\left(0, \boldsymbol{I}\right)$ denoting a Gaussian distribution, and ``$\odot$'' is the Hadamard product. Note that the exploration noise $\boldsymbol{\xi}$ is only applied to the training stage. Once the training is done, we only need $\bar{\boldsymbol{u}}_{l,\phi}$ in the implementation. Hence, at the training stage, $u_l$ in Figure \ref{fig:Cntrlblock} is equal to ${\boldsymbol{u}}_{l,\phi}$. Once the training is over, we have $u_l=\bar{\boldsymbol{u}}_{l,\phi}$.

\begin{algorithm}[tbp]
   \caption{Reinforcement learning control} \label{alg:SAC_Control}
\begin{algorithmic}[1]
   \STATE Initialize parameters $\theta_1$, $\theta_2$ for $Q_{{\theta}_{1}}$ and $Q_{{\theta}_{2}}$, respectively, and $\phi$ for the actor network  (\ref{eq:ActorNN}).
   \STATE Assign values to the the target parameters $\bar{\theta}_1\leftarrow\theta_1$, $\bar{\theta}_2\leftarrow\theta_2$, $\mathcal{D}\leftarrow\emptyset$,  
   $\mathcal{D}_0\leftarrow\emptyset$,
  \STATE \textcolor{blue}{Get data set $\mathcal{D}_0$ by running $\boldsymbol{u}_b$ on (\ref{eq:ASV_Dyn2}) with $\boldsymbol{u}_l=\boldsymbol{0}$}
  \STATE \textcolor{blue}{Turn off the exploration and train initial critic parameters $\theta_1^0$, $\theta_2^0$ using $\mathcal{D}_0$ according to (\ref{eq:JQ_theta}). }
  \STATE \textcolor{blue}{Initialize the replay memory $\mathcal{D}\leftarrow\mathcal{D}_0$}
  \STATE \textcolor{blue}{Assign initial values to critic parameters $\theta_1\leftarrow\theta_1^0$, $\theta_2\leftarrow\theta_2^0$ and their targets $\bar{\theta}_1\leftarrow\theta_1^0$, $\bar{\theta}_2\leftarrow\theta_2^0$}
   \REPEAT
   \FOR{each data collection step}
    \STATE Choose an action $\boldsymbol{u}_{l, t}$ according to $ \boldsymbol{\pi}_{\phi}\left(\boldsymbol{u}_{l, t}\vert \boldsymbol{s}_{t}\right)$ 
    \STATE Run both the nominal system (\ref{eq:ASV_Dyn_nom}) and the full system (\ref{eq:ASV_Dyn2}) \& collect $\boldsymbol{s}_{t+1}=\left\{\boldsymbol{x}_{t+1}, \boldsymbol{x}_{m, t+1}, \boldsymbol{u}_{b,t+1}\right\}$
    \STATE $\mathcal{D}\leftarrow\mathcal{D}\bigcup \left\{\boldsymbol{s}_{t}, \boldsymbol{u}_{l, t}, R\left(\boldsymbol{s}_{t}, \boldsymbol{u}_{l, t}\right), \boldsymbol{s}_{t+1}\right\}$
    \ENDFOR
    \FOR{each gradient update step}
    \STATE Sample a batch of data $\mathcal{B}$ from $\mathcal{D}$
   \STATE $\theta_j\leftarrow\theta_j-\iota_Q \nabla_{\theta} J_{Q}\left(\theta_j\right) $,   and $j=1$, $2$
   \STATE $\phi\leftarrow\phi-\iota_\pi \nabla_{\phi}J_{\boldsymbol{\pi}}\left(\phi\right)$, 
   \STATE $\alpha\leftarrow \alpha - \iota_\alpha \nabla_{\alpha}J_{{\alpha}}\left(\alpha\right)$
   \STATE $\bar{\theta}_j\leftarrow\kappa\theta_j+\left(1-\kappa\right)\bar{\theta}_j$,  and $j=1$, $2$
   \ENDFOR
   \UNTIL{convergence (i.e. $J_{Q}\left(\theta\right)<$ a small threshold)}
\end{algorithmic}
\end{algorithm}
Applying the policy gradient technique  to (\ref{eq:PI_Phi}), we can calculate the gradient of $J_{\boldsymbol{\pi}}\left(\phi\right)$ with respect to $\phi$ in terms of the stochastic gradient method as in (\ref{eq:StochG_Pi})
\begin{equation}
    \nabla_{\phi}J_{{\pi}} = \sum \frac{\alpha\nabla_{\phi}\ln\boldsymbol{\pi}_{\phi}+\left(\alpha\nabla_{\boldsymbol{u}_{l}}\ln\boldsymbol{\pi}_{\phi}-\nabla_{\boldsymbol{u}_{l}}Q_{{\theta}}\right)\nabla_{\phi}{\boldsymbol{u}}_{l,\phi}}{\vert\mathcal{B}\vert}\label{eq:StochG_Pi}
\end{equation}
The temperature parameters $\alpha$ are updated by minimizing the following objective function.
\begin{equation}
    J_{\alpha} =\mathbb{E}_{\boldsymbol{\pi}}\left[-\alpha\ln \boldsymbol{\pi}\left(\boldsymbol{u}_{l, t}\vert \boldsymbol{s}_{t}\right)-\alpha\bar{\mathcal{H}}\right]
\end{equation}
where $\bar{\mathcal{H}}$ is a target entropy. Following the same setting in \cite{Haarnoja2018SAC2}, we choose $\bar{\mathcal{H}}=-3$ where ``3'' here represents the action dimension. In the final implementation, we use two critics which are parameterized by $\theta_1$ and $\theta_2$, respectively. The two critics are introduced to reduce the over-estimation issue in the training of critic neural networks \cite{Fujimoto2018TD3}. Under the two-critic mechanism, the target value $Y_{target}$ is
\begin{align}
  Y_{target} &=R_t+\gamma\min\Big\{Q_{\bar{\theta}_{1}}\left(\boldsymbol{s}_{t+1}, \boldsymbol{u}_{l,t+1}\right), \Big.\nonumber \\
  &\Big.Q_{\bar{\theta}_{2}}\left(\boldsymbol{s}_{t+1}, \boldsymbol{u}_{l,t+1}\right)\Big\}-\gamma\alpha\ln\left(\boldsymbol{\pi}_{\phi}\right)  \label{eq:TargetNew}
\end{align}
The entire algorithm is summarized in Algorithm \ref{alg:SAC_Control}. In Algorithm \ref{alg:SAC_Control}, $\iota_Q$, $\iota_\pi$, and $\iota_\alpha$ are positive learning rates (scalars), and $\kappa>0$ is a constant scalar.


\begin{algorithm}[tb]
  \caption{Policy iteration technique} \label{alg:PI_Tech}
\begin{algorithmic}[1]
  \STATE Start from an initial control policy $\boldsymbol{u}_0$
  \REPEAT
  \FOR{Policy evaluation}
    \STATE Under a fixed policy $\boldsymbol{u}_l$, apply the Bellman backup operator $\mathcal{T}^{\pi}$ to the Q value function, $Q\left(\boldsymbol{s}_{t},\boldsymbol{u}_{l,t}\right)=\mathcal{T}^{\pi}Q\left(\boldsymbol{s}_{t},\boldsymbol{u}_{l,t}\right)$ (c.f., (\ref{eq:BellmanOp}))
    \ENDFOR
    \FOR{Policy improvement}
    \STATE Update policy $\boldsymbol{\pi}$ according to (\ref{eq:SAC_Obj})
  \ENDFOR
  \UNTIL{convergence}
\end{algorithmic}
\end{algorithm}
\section{Performance analysis}
In this subsection, both the convergence and stability of the proposed learning-based control are analyzed. For the analysis, the soft actor-critic RL method in Algorithm \ref{alg:SAC_Control} is recapped as a policy iteration (PI) technique which is summarized in Algorithm \ref{alg:PI_Tech}. We thereafter present the following two lemmas without proofs for the convergence analysis \cite{Haarnoja2018SAC1,Haarnoja2018SAC2}.
\begin{lemma}[Policy evaluation] \label{lem:Pi_Eval}
Let $\mathcal{T}^{\pi}$ be the Bellman backup operator under a fixed policy $\boldsymbol{\pi}$ and $Q^{k+1}\left(\boldsymbol{s},\boldsymbol{u}_l\right)=\mathcal{T}^{\pi}Q^{k}\left(\boldsymbol{s},\boldsymbol{u}_l\right)$. The sequence $Q^{k+1}\left(\boldsymbol{s},\boldsymbol{u}_l\right)$ will converge to the soft Q-function $Q^{\boldsymbol{\pi}}$ of the policy $\boldsymbol{\pi}$ as $k\to\infty$.
\end{lemma}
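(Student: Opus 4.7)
The plan is to establish this via the Banach fixed-point theorem applied to the soft Bellman backup operator $\mathcal{T}^{\pi}$ acting on the complete metric space of bounded functions $Q:\mathcal{S}\times\mathcal{U}\to\mathbb{R}$ equipped with the sup-norm $\Vert Q\Vert_{\infty}=\sup_{\boldsymbol{s},\boldsymbol{u}_l}\vert Q(\boldsymbol{s},\boldsymbol{u}_l)\vert$. If I can show $\mathcal{T}^{\pi}$ is a $\gamma$-contraction on this space, then the iterates $Q^{k+1}=\mathcal{T}^{\pi}Q^{k}$ converge geometrically to a unique fixed point which, by construction of (\ref{eq:BellmanOp}), must coincide with the soft Q-function $Q^{\boldsymbol{\pi}}$.

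First, I would simplify (\ref{eq:BellmanOp}) by exploiting that $\boldsymbol{\pi}$ is held fixed during policy evaluation, so the entropy term $-\alpha\ln(\boldsymbol{\pi}(\boldsymbol{u}_{l,t+1}\vert\boldsymbol{s}_{t+1}))$ does not depend on the current iterate $Q^{k}$. Defining an augmented reward $\tilde{R}_t=R_t+\gamma\,\alpha\,\mathbb{E}_{\boldsymbol{s}_{t+1}}\bigl[\mathcal{H}(\boldsymbol{\pi}(\cdot\vert\boldsymbol{s}_{t+1}))\bigr]$ absorbs the entropy bonus into the reward, so $\mathcal{T}^{\pi}$ reduces to a standard Bellman backup under the modified reward $\tilde{R}_t$. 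This reformulation lets me reuse the classical contraction argument for Bellman operators.

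Second, I would verify the contraction property directly. For any two bounded iterates $Q_1,Q_2$, the reward terms cancel and one is left with
\begin{equation*}
\bigl\vert \mathcal{T}^{\pi}Q_1(\boldsymbol{s},\boldsymbol{u}_l)-\mathcal{T}^{\pi}Q_2(\boldsymbol{s},\boldsymbol{u}_l)\bigr\vert
=\gamma\,\bigl\vert \mathbb{E}_{\boldsymbol{s}_{t+1}}\mathbb{E}_{\boldsymbol{\pi}}\bigl[Q_1-Q_2\bigr]\bigr\vert
\leq \gamma\,\Vert Q_1-Q_2\Vert_{\infty},
\end{equation*}
since taking expectations under the transition kernel and the (fixed) policy cannot increase the sup-norm. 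Taking the sup over $(\boldsymbol{s},\boldsymbol{u}_l)$ yields $\Vert \mathcal{T}^{\pi}Q_1-\mathcal{T}^{\pi}Q_2\Vert_{\infty}\leq \gamma\,\Vert Q_1-Q_2\Vert_{\infty}$. Provided $\gamma<1$, this is a strict contraction. Banach's fixed-point theorem then delivers a unique $Q^{\ast}$ with $\mathcal{T}^{\pi}Q^{\ast}=Q^{\ast}$, and $Q^{k}\to Q^{\ast}$ geometrically. Since $Q^{\boldsymbol{\pi}}$ by definition satisfies the soft Bellman equation $Q^{\boldsymbol{\pi}}=\mathcal{T}^{\pi}Q^{\boldsymbol{\pi}}$, uniqueness forces $Q^{\ast}=Q^{\boldsymbol{\pi}}$.

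The main obstacle is not the contraction inequality itself, which is routine, but rather guaranteeing that the iterates lie in a complete space where the argument applies. Concretely, I would need the augmented reward $\tilde{R}_t$ to be bounded, which requires $R_t$ to be bounded (fine, as the quadratic reward in (\ref{eq:ASV_Reward}) is bounded on compact $\mathcal{S}\times\mathcal{U}$) and the entropy $\mathcal{H}(\boldsymbol{\pi}(\cdot\vert\boldsymbol{s}))$ to be uniformly bounded in $\boldsymbol{s}$. For continuous action spaces as considered here, a separate assumption (e.g., policies restricted to a class with bounded differential entropy, or switching to an $L_p$-style weighted norm) is needed; I would also have to assume $\gamma<1$ strictly, since the case $\gamma=1$ allowed by the definition of the MDP makes the operator only nonexpansive, not contractive. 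With these technical provisions in place, the convergence claim follows.
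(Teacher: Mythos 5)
Your proof is correct, but note that the paper itself states Lemma~\ref{lem:Pi_Eval} \emph{without} proof, deferring to the cited soft actor-critic references; your argument (absorbing the policy-entropy term into an augmented reward $\tilde{R}_t$ and then running the standard $\gamma$-contraction / Banach fixed-point argument in the sup-norm) is essentially the same proof given in that cited source. Your added caveats are apt and even sharpen the statement: one does need $\gamma<1$ strictly (the paper allows $\gamma\in[0,1]$) and a uniformly bounded reward and policy entropy (automatic for finite action spaces, but requiring an extra assumption for the continuous Gaussian policies actually used here), neither of which the paper makes explicit.
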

\begin{lemma}[Policy improvement] \label{lem:Pi_Improve}
Let $\boldsymbol{\pi}_{old}$ be an old policy and $\boldsymbol{\pi}_{new}$ be a new policy obtained according to (\ref{eq:KL_pi}). There exists $Q^{\boldsymbol{\pi}_{new}}\left(\boldsymbol{s},\boldsymbol{u}_l\right)\geq Q^{\boldsymbol{\pi}_{old}}\left(\boldsymbol{s},\boldsymbol{u}_l\right)$ $\forall \boldsymbol{s}\in\mathcal{S}$ and $\forall \boldsymbol{u}\in\mathcal{U}$.
\end{lemma}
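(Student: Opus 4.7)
The plan is to mimic the classical SAC policy-improvement argument: use the optimality of $\boldsymbol{\pi}_{new}$ in the KL minimization to get a one-step inequality on the soft value, and then iterate it through the Bellman backup, invoking Lemma \ref{lem:Pi_Eval} in the limit.

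First I would rewrite the KL-divergence objective in (\ref{eq:KL_pi}) in the equivalent form (\ref{eq:PI_Q}), namely
\begin{equation*}
J_{\boldsymbol{\pi}_{old}}(\boldsymbol{\pi}) := \mathbb{E}_{\boldsymbol{u}_l\sim \boldsymbol{\pi}}\left[\alpha\ln \boldsymbol{\pi}(\boldsymbol{u}_l|\boldsymbol{s}) - Q^{\boldsymbol{\pi}_{old}}(\boldsymbol{s},\boldsymbol{u}_l)\right],
\end{equation*}
whose minimizer is $\boldsymbol{\pi}_{new}$. Since $\boldsymbol{\pi}_{old}$ is a feasible competitor, we have $J_{\boldsymbol{\pi}_{old}}(\boldsymbol{\pi}_{new}) \leq J_{\boldsymbol{\pi}_{old}}(\boldsymbol{\pi}_{old})$, and the right-hand side equals $-V^{\boldsymbol{\pi}_{old}}(\boldsymbol{s})$ by definition of the soft value function. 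Rearranging yields the key inequality
\begin{equation*}
\mathbb{E}_{\boldsymbol{u}_l\sim \boldsymbol{\pi}_{new}}\left[Q^{\boldsymbol{\pi}_{old}}(\boldsymbol{s},\boldsymbol{u}_l) - \alpha\ln \boldsymbol{\pi}_{new}(\boldsymbol{u}_l|\boldsymbol{s})\right] \geq V^{\boldsymbol{\pi}_{old}}(\boldsymbol{s}).
\end{equation*}

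Next I would plug this into the soft Bellman equation. Starting from $Q^{\boldsymbol{\pi}_{old}}(\boldsymbol{s}_t,\boldsymbol{u}_{l,t}) = R_t + \gamma \mathbb{E}_{\boldsymbol{s}_{t+1}}[V^{\boldsymbol{\pi}_{old}}(\boldsymbol{s}_{t+1})]$ and substituting the above bound gives
\begin{equation*}
Q^{\boldsymbol{\pi}_{old}}(\boldsymbol{s}_t,\boldsymbol{u}_{l,t}) \leq R_t + \gamma \mathbb{E}_{\boldsymbol{s}_{t+1}}\mathbb{E}_{\boldsymbol{\pi}_{new}}\!\left[Q^{\boldsymbol{\pi}_{old}}(\boldsymbol{s}_{t+1},\boldsymbol{u}_{l,t+1}) - \alpha\ln \boldsymbol{\pi}_{new}(\boldsymbol{u}_{l,t+1}|\boldsymbol{s}_{t+1})\right] = \mathcal{T}^{\boldsymbol{\pi}_{new}} Q^{\boldsymbol{\pi}_{old}}(\boldsymbol{s}_t,\boldsymbol{u}_{l,t}),
\end{equation*}
which is exactly one application of the soft Bellman backup operator under the new policy. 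Because $\mathcal{T}^{\boldsymbol{\pi}_{new}}$ is a monotone contraction, iterating this inequality produces a non-decreasing sequence $Q^{\boldsymbol{\pi}_{old}} \leq \mathcal{T}^{\boldsymbol{\pi}_{new}} Q^{\boldsymbol{\pi}_{old}} \leq (\mathcal{T}^{\boldsymbol{\pi}_{new}})^2 Q^{\boldsymbol{\pi}_{old}} \leq \ldots$, and by Lemma \ref{lem:Pi_Eval} applied to $\boldsymbol{\pi}_{new}$ the limit equals $Q^{\boldsymbol{\pi}_{new}}$. Taking the limit yields the desired bound $Q^{\boldsymbol{\pi}_{new}}(\boldsymbol{s},\boldsymbol{u}_l) \geq Q^{\boldsymbol{\pi}_{old}}(\boldsymbol{s},\boldsymbol{u}_l)$ pointwise.

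The most delicate step I anticipate is the monotonicity of the iteration, which requires checking that $\mathcal{T}^{\boldsymbol{\pi}_{new}}$ preserves the inequality $f \leq g \Rightarrow \mathcal{T}^{\boldsymbol{\pi}_{new}} f \leq \mathcal{T}^{\boldsymbol{\pi}_{new}} g$. This is straightforward from the linearity of the expectation in (\ref{eq:BellmanOp}), but it must be invoked carefully together with $\gamma\in[0,1)$ to guarantee that the recursive substitution converges and that boundary terms do not contaminate the limit. The other subtle point is ensuring the feasibility of $\boldsymbol{\pi}_{old}$ in the minimization (\ref{eq:KL_pi}); if the policy family is restricted to a parametric class (as in the deep network implementation), the one-step inequality only holds when $\boldsymbol{\pi}_{old}$ remains in that class, which is the standard caveat separating the tabular proof from the function-approximation setting.
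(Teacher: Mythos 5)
Your proof is correct, and it is essentially the canonical argument: the paper itself states this lemma without proof, deferring to the soft actor-critic papers it cites, and your derivation (feasibility of $\boldsymbol{\pi}_{old}$ in the KL/entropy-regularized objective giving the one-step value bound, then substitution into the soft Bellman equation and repeated application of the backup operator with Lemma \ref{lem:Pi_Eval} to pass to $Q^{\boldsymbol{\pi}_{new}}$) is exactly that standard proof. Your added remarks on monotonicity of $\mathcal{T}^{\boldsymbol{\pi}_{new}}$ and on $\boldsymbol{\pi}_{old}$ needing to lie in the policy class are sound and consistent with the cited argument.
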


In terms of (\ref{lem:Pi_Eval}) and (\ref{lem:Pi_Improve}), we are ready to present Theorem \ref{thm:Converge} to show the convergence of the SAC algorithm.
\begin{theorem}[\textbf{Convergence}] \label{thm:Converge}
If one repeatedly applies the policy evaluation and policy improvement steps to any control policy $\boldsymbol{\pi}$, the control policy $\boldsymbol{\pi}$ will converge to an optimal policy $\boldsymbol{\pi}^{*}$ such that  $Q^{\boldsymbol{\pi}^{*}}\left(\boldsymbol{s},\boldsymbol{u}_l\right) \geq Q^{\boldsymbol{\pi}}\left(\boldsymbol{s},\boldsymbol{u}_l\right)$ $\forall \boldsymbol{\pi}\in\Pi$,  $\forall \boldsymbol{s}\in\mathcal{S}$,  and $\forall \boldsymbol{u}\in\mathcal{U}$,  where $\Pi$ denotes a policy set.
\end{theorem}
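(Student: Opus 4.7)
The plan is to follow the classical policy iteration convergence argument, adapted to the maximum-entropy setting, leveraging the two preceding lemmas as the workhorses. I would construct the sequence of policies $\{\boldsymbol{\pi}_i\}_{i=0,1,2,\ldots}$ generated by Algorithm \ref{alg:PI_Tech}, where each $\boldsymbol{\pi}_{i+1}$ is obtained from $\boldsymbol{\pi}_{i}$ by first running policy evaluation to compute $Q^{\boldsymbol{\pi}_i}$ (Lemma \ref{lem:Pi_Eval} guarantees this limit exists for each fixed policy), and then applying the KL-projection step (\ref{eq:KL_pi}) to obtain $\boldsymbol{\pi}_{i+1}$. This reduces the theorem to a statement about the behaviour of the sequence $\{Q^{\boldsymbol{\pi}_i}\}$.

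First I would invoke Lemma \ref{lem:Pi_Improve} to conclude that this sequence is monotonically non-decreasing pointwise, i.e.\ $Q^{\boldsymbol{\pi}_{i+1}}(\boldsymbol{s},\boldsymbol{u}_l) \geq Q^{\boldsymbol{\pi}_{i}}(\boldsymbol{s},\boldsymbol{u}_l)$ for every $(\boldsymbol{s},\boldsymbol{u}_l)$. Next I would argue that the sequence is uniformly bounded from above: the reward $R_t$ in (\ref{eq:ASV_Reward}) is non-positive, and the entropy bonus $-\alpha\ln\boldsymbol{\pi}$ is bounded as long as the action space is bounded (or the policy class yields bounded differential entropy), so $Q^{\boldsymbol{\pi}_i}$ is bounded by a geometric series in $\gamma<1$. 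A bounded monotone sequence in $\mathbb{R}$ converges, so there is a pointwise limit $Q^\infty$, and by Lemma \ref{lem:Pi_Eval} the corresponding policy limit $\boldsymbol{\pi}^{*}$ satisfies $Q^{\boldsymbol{\pi}^{*}} = Q^{\infty}$.

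To finish, I would show $\boldsymbol{\pi}^{*}$ is optimal by a contradiction argument. Suppose there exists $\boldsymbol{\pi}'\in\Pi$ and a pair $(\boldsymbol{s},\boldsymbol{u}_l)$ with $Q^{\boldsymbol{\pi}'}(\boldsymbol{s},\boldsymbol{u}_l) > Q^{\boldsymbol{\pi}^{*}}(\boldsymbol{s},\boldsymbol{u}_l)$. Then applying the policy improvement step (\ref{eq:KL_pi}) starting from $\boldsymbol{\pi}^{*}$ would, by Lemma \ref{lem:Pi_Improve}, yield a strictly improved policy, contradicting the fixed-point property $\boldsymbol{\pi}^{*}=\arg\min_{\boldsymbol{\pi}'}\mathscr{D}_{KL}\!\left(\boldsymbol{\pi}'\Vert Z^{\boldsymbol{\pi}^{*}} e^{Q^{\boldsymbol{\pi}^{*}}}\right)$ reached at the limit. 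Hence $Q^{\boldsymbol{\pi}^{*}}(\boldsymbol{s},\boldsymbol{u}_l)\geq Q^{\boldsymbol{\pi}}(\boldsymbol{s},\boldsymbol{u}_l)$ for every $\boldsymbol{\pi}\in\Pi$ and every $(\boldsymbol{s},\boldsymbol{u}_l)$, as claimed.

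The main obstacle I anticipate is the boundedness/integrability step needed to justify monotone convergence in the continuous state-action setting of the ASV (where $\mathcal{S}$ and $\mathcal{U}$ are not finite). A fully rigorous treatment would require either restricting to bounded rewards and entropies or invoking a measurable selection / Banach fixed-point argument on the soft Bellman operator; in practice, since this paper is framed in the same way as \cite{Haarnoja2018SAC1,Haarnoja2018SAC2}, I would handle this by explicitly assuming $|R_t|$ and $\mathcal{H}(\boldsymbol{\pi})$ are uniformly bounded, which is natural given the quadratic reward on a bounded operating envelope and a Gaussian policy class with bounded variance, and cite the corresponding lemmas for the discrete case as the template.
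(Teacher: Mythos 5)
Your proposal follows essentially the same route as the paper's proof: $Q^{\boldsymbol{\pi}_i}$ is monotonically non-decreasing by Lemma \ref{lem:Pi_Improve}, upper bounded because the reward (\ref{eq:ASV_Reward}) is non-positive and $\gamma<1$, and a bounded monotone sequence converges to a limit $Q^{\boldsymbol{\pi}^{*}}$ dominating all policies. Your concluding contradiction argument and boundedness caveats actually go beyond the paper, which merely asserts optimality of the limit; just be aware that Lemma \ref{lem:Pi_Improve} only guarantees weak improvement, so the airtight version of that final step is the SAC-style argument that the fixed point $\boldsymbol{\pi}^{*}$ of (\ref{eq:KL_pi}) minimizes $\mathbb{E}_{\boldsymbol{\pi}}\left[\alpha\ln\boldsymbol{\pi}-Q^{\boldsymbol{\pi}^{*}}\right]$ over $\Pi$, and iterating the soft Bellman expansion then yields $Q^{\boldsymbol{\pi}^{*}}\geq Q^{\boldsymbol{\pi}}$ for all $\boldsymbol{\pi}\in\Pi$.
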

\begin{proof}
Let $\boldsymbol{\pi}_{i}$ be the policy obtained from the $i$-th policy improvement with $i=0$, $1$, $\ldots$, $\infty$. According to Lemma \ref{lem:Pi_Improve}, one has $Q^{\boldsymbol{\pi}_{i}}\left(\boldsymbol{s},\boldsymbol{u}_l\right)\geq Q^{\boldsymbol{\pi}_{i-1}}\left(\boldsymbol{s},\boldsymbol{u}_l\right)$, so $Q^{\boldsymbol{\pi}_{i}}\left(\boldsymbol{s},\boldsymbol{u}_l\right)$ is monotonically non-decreasing with respect to the policy iteration step $i$. In addition, $Q^{\boldsymbol{\pi}_{i}}\left(\boldsymbol{s},\boldsymbol{u}_l\right)$ is upper bounded according to the definition of the reward given in (\ref{eq:ASV_Reward}), so $Q^{\boldsymbol{\pi}_{i}}\left(\boldsymbol{s},\boldsymbol{u}_l\right)$ will converge to an upper limit $Q^{\boldsymbol{\pi}^{*}}\left(\boldsymbol{s},\boldsymbol{u}_l\right)$ with ${Q}^{\boldsymbol{\pi}^{*}}\left(\boldsymbol{s},\boldsymbol{u}_l\right) \geq Q^{\boldsymbol{\pi}}\left(\boldsymbol{s},\boldsymbol{u}_l\right)$ $\forall \boldsymbol{\pi}\in\Pi$,  $\forall \boldsymbol{s}\in\mathcal{S}$,  and $\forall \boldsymbol{u}_l\in\mathcal{U}$. 
\end{proof}

Theorem \ref{thm:Converge} demonstrates that we can find an optimal policy by repeating the policy evaluation and improvement processes. Next, we will show the closed-loop stability of the overall control law (baseline control $\boldsymbol{u}_b$ plus the learned control $\boldsymbol{u}_{l}$). The following assumption is made for the baseline control developed using the nominal system (\ref{eq:ASV_Dyn_nom}).  
\begin{assumption}\label{assump:BaselineC}
The baseline control law $\boldsymbol{u}_b$ can ensure that the overall uncertain ASV system is stable -- that is, there exists a Lyapunov function $\mathbb{V}\left(\boldsymbol{s}_{t}\right)$ associate with $\boldsymbol{u}_b$ such that $\mathbb{V}\left(\boldsymbol{s}_{t+1}\right)-\mathbb{V}\left(\boldsymbol{s}_{t}\right)\leq 0$ $\forall \boldsymbol{s}_{t} \in \mathcal{S}$.
\end{assumption}
Note that the baseline control $\boldsymbol{u}_b$ is implicitly included in the state vector $\boldsymbol{s}$, as $\boldsymbol{s}$ consists of $\boldsymbol{x}$, $\boldsymbol{x}_m$, and $\boldsymbol{u}_b$ in this paper as discussed in Section \ref{sec:MR_DeepRL}. Hence, $\mathbb{V}\left(\boldsymbol{s}_{t}\right)$ in Assumption \ref{assump:BaselineC} is the Lyapunov function for the closed-loop system of (\ref{eq:ASV_Dyn2}) with the baseline control $\boldsymbol{u}_b$.

Assumption \ref{assump:BaselineC} is possible in real world. One could treat the nominal model (\ref{eq:ASV_Dyn_nom}) as a linearized model of the overall ASV system (\ref{eq:ASV_Dyn2}) around a certain equilibrium. Therefore, a control law, which ensures asymptotic stability for (\ref{eq:ASV_Dyn_nom}), can ensure at least local stability for (\ref{eq:ASV_Dyn2}) \cite{Khalil2002Book}. In the stability analysis, we will ignore the entropy term $\mathcal{H}\left(\boldsymbol{\pi}\right)$, as it will converge to zero in the end and it is only introduced to regulate the exploration magnitude. Now, we present Theorem \ref{thm:Stab} to demonstrate the closed-loop stability of the ASV system (\ref{eq:ASV_Dyn2}) under the composite control law (\ref{eq:entireCntrl}).
\begin{theorem}[\textbf{Stability}] \label{thm:Stab}
Suppose Assumption \ref{assump:BaselineC} holds. The overall control law $\boldsymbol{u}^i = \boldsymbol{u}_b + \boldsymbol{u}_{l}^i$ can always stabilize the ASV system (\ref{eq:ASV_Dyn2}), where $\boldsymbol{u}_{l}^i$ represents the RL control law from $i$-th iteration, and $i=0$, $1$, $2$, ... $\infty$.
\end{theorem}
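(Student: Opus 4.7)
The plan is to proceed by induction on the iteration index $i$, exhibiting at each step a Lyapunov-type function that decreases along closed-loop trajectories of (\ref{eq:ASV_Dyn2}) under the composite control (\ref{eq:entireCntrl}).

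For the base case $i = 0$, I would exploit the initialization scheme of Algorithm \ref{alg:SAC_Control}: lines 3--6 collect data by running $\boldsymbol{u}_b$ alone with $\boldsymbol{u}_l = \boldsymbol{0}$, so the effective control at the outset reduces to $\boldsymbol{u}_b$. Assumption \ref{assump:BaselineC} then directly supplies a Lyapunov function $\mathbb{V}(\boldsymbol{s}_t)$ satisfying $\mathbb{V}(\boldsymbol{s}_{t+1}) - \mathbb{V}(\boldsymbol{s}_t) \leq 0$, which establishes closed-loop stability at iteration zero.

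For the inductive step, suppose $\boldsymbol{u}^i = \boldsymbol{u}_b + \boldsymbol{u}_l^i$ already stabilizes (\ref{eq:ASV_Dyn2}), and consider the update from $\boldsymbol{\pi}_i$ to $\boldsymbol{\pi}_{i+1}$. I would introduce the auxiliary function $W^i(\boldsymbol{s}_t) = -V^{\boldsymbol{\pi}_i}(\boldsymbol{s}_t)$, which is nonnegative because the reward in (\ref{eq:ASV_Reward}) is nonpositive and hence $V^{\boldsymbol{\pi}_i} \leq 0$. Lemma \ref{lem:Pi_Improve} yields $V^{\boldsymbol{\pi}_{i+1}} \geq V^{\boldsymbol{\pi}_i}$ pointwise, so $W^{i+1} \leq W^i$. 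I would then form the composite candidate $L^{i+1}(\boldsymbol{s}_t) = \mathbb{V}(\boldsymbol{s}_t) + \lambda W^{i+1}(\boldsymbol{s}_t)$ for a suitable scalar $\lambda > 0$, and use the Bellman relation $V^{\boldsymbol{\pi}_{i+1}}(\boldsymbol{s}_t) = R_t + \gamma \mathbb{E}\!\left[V^{\boldsymbol{\pi}_{i+1}}(\boldsymbol{s}_{t+1})\right]$ together with Assumption \ref{assump:BaselineC} to verify $L^{i+1}(\boldsymbol{s}_{t+1}) - L^{i+1}(\boldsymbol{s}_t) \leq 0$ in expectation, thereby propagating stability from $i$ to $i+1$.

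The hardest part will be bridging the gap between the policy-improvement guarantee, which is a statement about expected discounted returns, and a pointwise Lyapunov decrease along trajectories. In particular, Assumption \ref{assump:BaselineC} is stated for the closed loop driven by $\boldsymbol{u}_b$ alone, whereas the actual dynamics evolve under $\boldsymbol{u}_b + \boldsymbol{u}_l^{i+1}$; I will need to argue that the additive RL term cannot destabilize the system because it is selected to reduce the tracking-error penalty in (\ref{eq:ASV_Reward}) while its own magnitude is simultaneously penalized by $\boldsymbol{u}_{l,t}^T \boldsymbol{H} \boldsymbol{u}_{l,t}$. Reconciling these two perspectives—likely by moving to a stochastic Lyapunov framework and carefully handling the discount factor $\gamma$ in the Bellman recursion so that the $\lambda W^{i+1}$ contribution dominates any sign ambiguity in $\mathbb{V}(\boldsymbol{s}_{t+1}) - \mathbb{V}(\boldsymbol{s}_t)$ caused by the extra term $\boldsymbol{u}_l^{i+1}$—is the principal technical obstacle.
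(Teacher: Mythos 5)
There is a genuine gap, and it sits exactly where you flag your ``principal technical obstacle'': the inductive step is never actually carried out, and the mechanism you sketch for it would not close it. Your composite candidate $L^{i+1}(\boldsymbol{s}_t)=\mathbb{V}(\boldsymbol{s}_t)+\lambda W^{i+1}(\boldsymbol{s}_t)$ combines two objects neither of which is known to decrease along the trajectories of the closed loop under $\boldsymbol{u}_b+\boldsymbol{u}_l^{i+1}$: Assumption \ref{assump:BaselineC} only certifies the decrease of $\mathbb{V}$ for the loop driven by $\boldsymbol{u}_b$ alone, and the discounted Bellman relation for $W^{i+1}=-V^{\boldsymbol{\pi}_{i+1}}$ gives $\gamma\,\mathbb{E}\!\left[W^{i+1}(\boldsymbol{s}_{t+1})\right]-W^{i+1}(\boldsymbol{s}_t)=R_t\le 0$, i.e.\ only $\mathbb{E}\!\left[W^{i+1}(\boldsymbol{s}_{t+1})\right]\le W^{i+1}(\boldsymbol{s}_t)/\gamma$, which with $\gamma<1$ permits growth rather than forcing a decrease. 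Likewise, Lemma \ref{lem:Pi_Improve} gives the ordering $W^{i+1}\le W^i$ \emph{across iterations at a fixed state}, not a decrement \emph{along time} for a fixed policy, so it cannot substitute for the missing Lyapunov inequality; and the proposed domination of the sign-indefinite increment of $\mathbb{V}$ by $\lambda$ times the $W^{i+1}$ term has no quantitative bounds to support it.

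The idea the paper uses, and which your plan lacks, is to make the \emph{previous} iteration's value function itself the Lyapunov function for the \emph{new} policy, exploiting the optimization structure of the improvement step rather than the baseline Lyapunov function. Concretely: set $\mathbb{V}^{0}(\boldsymbol{s})=-Q^{0}(\boldsymbol{s},\boldsymbol{u}_l)$ for the policy-evaluated baseline (this is where Assumption \ref{assump:BaselineC} enters, once); note that $\boldsymbol{u}^{1}$ is the minimizer of $-R_t+\gamma\,\mathbb{V}^{0}(\boldsymbol{s}_{t+1})$, so for dynamics affine in the input the first-order condition gives $\boldsymbol{u}^{1}=-\tfrac{1}{2}\boldsymbol{H}^{-1}\boldsymbol{g}(\boldsymbol{s}_t)^{T}\,\partial\mathbb{V}^{0}(\boldsymbol{s}_{t+1})/\partial\boldsymbol{s}_{t+1}$; substituting this back yields the explicit decrement
\begin{equation*}
\mathbb{V}^{0}(\boldsymbol{s}_{t+1})-\mathbb{V}^{0}(\boldsymbol{s}_{t})
= -\left(\boldsymbol{x}_{t}-\boldsymbol{x}_{m,t}\right)^{T}\boldsymbol{G}\left(\boldsymbol{x}_{t}-\boldsymbol{x}_{m,t}\right)
-\tfrac{1}{4}\left(\tfrac{\partial \mathbb{V}^{0}}{\partial \boldsymbol{s}_{t+1}}\right)^{T}\!\boldsymbol{g}\,\boldsymbol{H}^{-1}\boldsymbol{g}^{T}\tfrac{\partial \mathbb{V}^{0}}{\partial \boldsymbol{s}_{t+1}}\le 0,
\end{equation*}
which shows $\mathbb{V}^{0}$ decreases along the loop closed by $\boldsymbol{u}^{1}$, hence $\boldsymbol{u}^{1}$ is stabilizing; one then re-evaluates to get $\mathbb{V}^{1}=-Q^{1}$ and repeats for every $i$. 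This is the bridge between the policy-improvement step and a pointwise Lyapunov decrease that your proposal identifies as missing but does not supply; without it (or some equivalent argument), your base case stands but the induction does not go through.
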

\begin{proof}
In our proposed algorithm, we start the training/learning using the baseline control law $\boldsymbol{u}_b$. According to Lemma \ref{lem:Pi_Eval}, we are able to obtain the corresponding Q value function for the baseline control law $\boldsymbol{u}_b$. Let the Q value function be $ Q^{0}\left(\boldsymbol{s},\boldsymbol{u}_l\right)$ where $\boldsymbol{u}_l$ is a function of $\boldsymbol{s}$.  According to the definitions of the reward function in (\ref{eq:ASV_Reward}) and Q value function in (\ref{eq: Action-Value Func}), we can choose the Lyapunov function candidate as $\mathbb{V}^{0}\left(\boldsymbol{s}\right) = - Q^{0}\left(\boldsymbol{s},\boldsymbol{u}_l\right)$. If Assumption \ref{assump:BaselineC} holds, there exists $\mathbb{V}^{0}\left(\boldsymbol{s_{t+1}}\right)-\mathbb{V}^{0} \left(\boldsymbol{s_{t}}\right)\leq 0$ $\forall \boldsymbol{s_{t}}\in\mathcal{S}$.

In the policy improvement, the control law is updated by
\begin{equation}
   \boldsymbol{u}^{1} = \min_{\boldsymbol{\pi}} \left(-R_t+\gamma \mathbb{V}^{0}\left(\boldsymbol{s_{t+1}}\right)\right) \label{eq: iter1_min}
\end{equation}
where the expectation operator is ignored as the system is deterministic. For any nonlinear system $\boldsymbol{s}_{t+1}=\boldsymbol{f}\left(\boldsymbol{s}_{t}\right)+\boldsymbol{g}\left(\boldsymbol{s}_{t}\right)\boldsymbol{u}_{t}$, a necessary condition for the existence of (\ref{eq: iter1_min}) is 
\begin{equation}
    \boldsymbol{u}^{1} = - \frac{1}{2}\boldsymbol{H}^{-1}\boldsymbol{g}\left(\boldsymbol{s}_{t}\right)^T\frac{\partial \mathbb{V}^{0}\left(\boldsymbol{s_{t+1}}\right)}{\partial \boldsymbol{s_{t+1}}} \label{eq: u1_min}
\end{equation}
Substituting (\ref{eq: u1_min}) back into (\ref{eq: iter1_min} yields
\begin{eqnarray}
    \mathbb{V}^{0}\left(\boldsymbol{s_{t+1}}\right)-\mathbb{V}^{0}\left(\boldsymbol{s_{t}}\right) &=&  -\left(\boldsymbol{x}_{t}-\boldsymbol{x}_{m,t}\right)^T\boldsymbol{G} \left(\boldsymbol{x}_{t}-\boldsymbol{x}_{m,t}\right) \nonumber\\
    && - \frac{1}{4}\left(\frac{\partial \mathbb{V}^{0}\left(\boldsymbol{s_{t+1}}\right)}{\partial \boldsymbol{s_{t+1}}}\right)^T\boldsymbol{g}\left(\boldsymbol{s}_{t}\right) \nonumber\\
    &&\times \boldsymbol{H}^{-1}\boldsymbol{g}\left(\boldsymbol{s}_{t}\right)^T\frac{\partial \mathbb{V}^{0}\left(\boldsymbol{s_{t+1}}\right)}{\partial \boldsymbol{s_{t+1}}}\leq 0 \nonumber
\end{eqnarray}
Hence, $\boldsymbol{u}^{1}$ is a control law which can stabilize the same ASV system (\ref{eq:ASV_Dyn2}), if Assumption \ref{assump:BaselineC} holds. Applying Lemma \ref{lem:Pi_Eval} to $\boldsymbol{u}^{1}$, we can get a new Lyapunov function $\mathbb{V}^{1}\left(\boldsymbol{s_{t}}\right)$. In terms of  $\mathbb{V}^{1}\left(\boldsymbol{s_{t}}\right)$, (\ref{eq: iter1_min}) and (\ref{eq: u1_min}), we can show that $\boldsymbol{u}^{2}$ also stabilizes the ASV system (\ref{eq:ASV_Dyn2}). Repeating (\ref{eq: iter1_min}) and (\ref{eq: u1_min}) for all $i=1$, $2$, $\ldots$, we can prove that all $\boldsymbol{u}^{i}$ can stabilize the ASV system (\ref{eq:ASV_Dyn2}), if Assumption \ref{assump:BaselineC} holds.   
\end{proof}

\section{Simulation} \label{sec:Sim&Exp}
In this section, the proposed learning-based control algorithm is implemented to the trajectory tracking control of a supply ship model presented in \cite{Skjetnea2005Auto, Peng2013}. Model parameters are summarized in Table \ref{tab:HydroTab}. The unmodeled dynamics in the simulations are given by $g_1 = 0.279uv^2 + 0.342 v^2r$, $g_2 = 0.912u^2v$, and $ g_3 = 0.156ur^2+0.278urv^3$, respectively.
The based-line control law $\boldsymbol{u}_b$ is designed based on a nominal model with the following simplified linear dynamics in terms of the backstepping control method \cite{Khalil2002Book, Zhang2018TIE}.
\begin{equation}
    \boldsymbol{M}_m\dot{\boldsymbol{\nu}}_m = \boldsymbol{\tau}-\boldsymbol{D}_m{\boldsymbol{\nu}}_m \label{eq:LinearModel}
\end{equation}
where $\boldsymbol{M}_m=diag\left\{M_{11},\;M_{22},\;M_{33}\right\}$. $\boldsymbol{D}_m=diag\left\{-X_{v},-Y_{v},\;-N_{r}\right\}$.  The reference signal is assumed to be produced by the following motion planner.
\begin{equation}
       \dot{\boldsymbol{\eta}}_r = \boldsymbol{R}\left(\boldsymbol{\eta}_r\right)\boldsymbol{\nu}_r \quad
       \dot{\boldsymbol{\nu}}_r= \boldsymbol{a}_r \label{eq:MotionPlanner}
\end{equation}
where ${\boldsymbol{\eta}}_r=\left[x_r,y_r,\psi_r\right]^T$ is the generalized reference position vector, ${\boldsymbol{\nu}}_r=\left[u_r,0,r_r\right]^T$ is the generalized reference velocity vector, and $\boldsymbol{a}_r=\left[\dot{u}_r,0,\dot{r}_r\right]^T$. In the simulation, the initial position vector  ${\boldsymbol{\eta}}_r\left(0\right)$ is chosen to be ${\boldsymbol{\eta}}_r\left(0\right)=\left[0,0,\frac{\pi}{4}\right]^T$, and we set $u_r\left(0\right)=0.4$ $m/s$ and $r_r\left(0\right)=0$ $rad/s$. The reference acceleration $\dot{u}_r$ and angular rates are  chosen to be
\begin{eqnarray}
    \dot{u}_r&=&\left\{\begin{array}{ll}
    0.005\; m/s^2 &\text{if }  t<20\; s \\
    0    \; m/s^2 & \text{otherwise}
    \end{array}
    \right. \\
    \dot{r}_r&=&\left\{\begin{array}{ll}
    \frac{\pi}{600}\; rad/s^2 \; & \;  \text{if } 25\; s \leq t<50 \;s \\
    0    \; rad/s^2  & \text{otherwise}
    \end{array}
    \right.
\end{eqnarray}

\begin{table}[tbph]
    \centering
    \caption{Model parameters} \vspace{-2mm} \label{tab:HydroTab}
    \begin{tabular}{cc|cc}
    \toprule
       Parameters  & Values & Parameters  & Values \\ \toprule
        $m$ & $23.8$ & $Y_{\dot{r}}$ & $-0.0$ \\
        $I_z$ & $1.76$ & $Y_{{r}}$ & $0.1079$ \\
        $x_g$ & $0.046$ & $Y_{\vert{v}\vert {r}}$ & $-0.845$ \\
        $X_{\dot{u}}$ & $-2.0$ & $Y_{\vert{r}\vert {r}}$ & $-3.45$ \\
        $X_{{u}}$ & $-0.7225$ & $N_{{v}}$ & $-0.1052$ \\
        $X_{\vert{u}\vert {u}}$ & $-1.3274$ & $N_{\vert{v}\vert {v}}$ & $5.0437$ \\
        $X_{{u}{u}{u}}$ & $-1.8664$  & $N_{\vert{r}\vert {v}}$ & $-0.13$ \\
        $Y_{\dot{v}}$ & $-10.0$ & $N_{\dot{r}}$ & $-1.0$ \\
        $Y_{{v}}$ & $-0.8612$ & $N_{{r}}$ & $-1.9$ \\
        $Y_{\vert{v}\vert {v}}$ & $-36.2823$ & $N_{\vert{v}\vert {r}}$ & $0.08$ \\
        $Y_{\vert{r}\vert {v}}$ & $-0.805$ & $N_{\vert{r}\vert {r}}$ & $-0.75$  \\ \toprule
    \end{tabular}
\end{table}

\begin{table}[tbph]
    \centering
    \caption{Reinforcement learning configurations} \vspace{-2mm} \label{tab:RLTab}
    \begin{tabular}{cc}
    \toprule
      Parameters  & Values  \\ \toprule
        Learning rate $\iota_Q$ &  $0.001$ \\
        Learning rate   $\iota_\pi$ &  $0.0001$ \\
        Learning rate $\iota_\alpha$ &  $0.0001$ \\
        $\kappa$ & $0.01$ \\
        actor neural network & fully connected with two hidden layers \\
        &  (128 neurons per hidden layer) \\
        critic neural networks & fully connected with two hidden layers  \\
        &  (128 neurons per hidden layer) \\
        Replay memory capacity & $1\times 10^{6}$ \\
        Sample batch size &$128$ \\
        $\gamma$ & $0.998$ \\
        Training episodes & $1001$ \\
        Steps per episode & $1000$  \\
        time step size $\delta t$ & $0.1$  \\\toprule
    \end{tabular}
\end{table}
At the training stage, we uniformly randomly sample $x\left(0\right)$ and $y\left(0\right)$ from $\left(-1.5, 1.5\right)$,  $\psi\left(0\right)$ from $\left(0.1\pi, 0.4\pi\right)$ and $u\left(0\right)$ from $\left(0.2, 0.4\right)$, and we choose $v\left(0\right)=0$ and $r\left(0\right)= 0$. The proposed control algorithm is compared with two benchmark designs: the baseline control $\boldsymbol{u}_0$ and the RL control without $\boldsymbol{u}_0$. Configurations for the training and neural networks are found in Table \ref{tab:RLTab}. The matrix $\boldsymbol{G}$ and $\boldsymbol{H}$ are chosen to be $\boldsymbol{G}=diag\left\{0.025,0.025, 0.0016, 0.005,0.001, 0\right\}$ and $\boldsymbol{H}=diag\left\{1.25e^{-4},1.25e^{-4}, 8.3e^{-5}\right\}$, respectively. 
\begin{figure}[tbp]
    \centering
    \includegraphics[width=0.45\textwidth]{./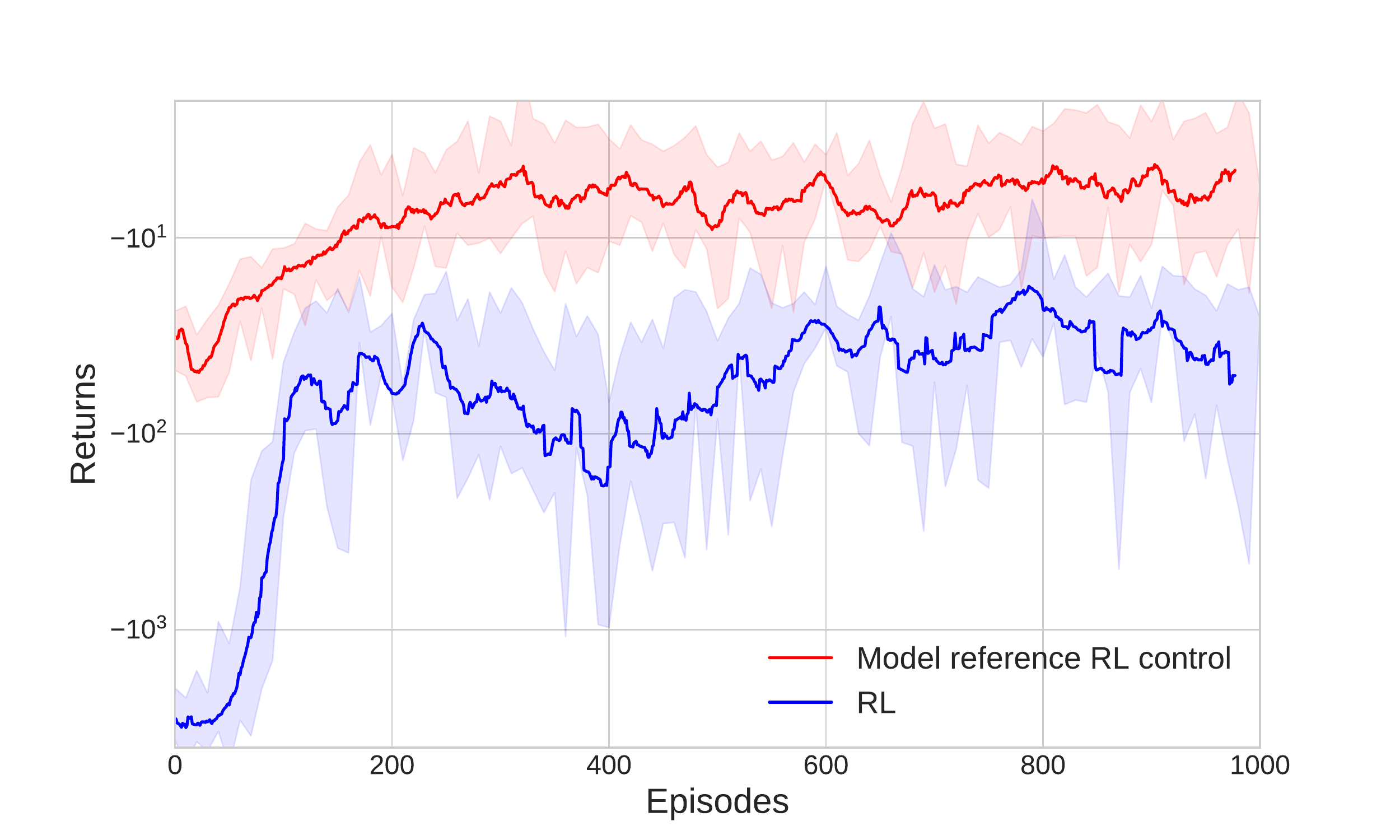}
    \caption{Learning curves of two RL algorithms at training (One episode is a training trial, and $1000$ time steps per episode) }
    \label{fig:learningCurve}
\end{figure}

At the training stage, we run the ASV system for $100$ $s$, and the repeat the training processes for $1000$ times (i.e., $1000$ episodes). 
Figure \ref{fig:learningCurve} shows the learning curves of the proposed algorithm (red) and the RL algorithm without baseline control (blue). The learning curves demonstrate that both of the two algorithms will converge in terms of the long term returns. However, our proposed algorithm results in a larger return (red) in comparison with the RL without baseline control (blue). Hence, the introduction of the baseline control helps to increase the sample efficiency significantly, as the proposed algorithm (blue) converges faster to a higher return value.  

\begin{figure*}[h]
  \subfloat[Model reference reinforcement learning control]{
	\begin{minipage}[c][1\width]{
	   0.32\textwidth}
	   \centering
	   \includegraphics[width=1\textwidth]{./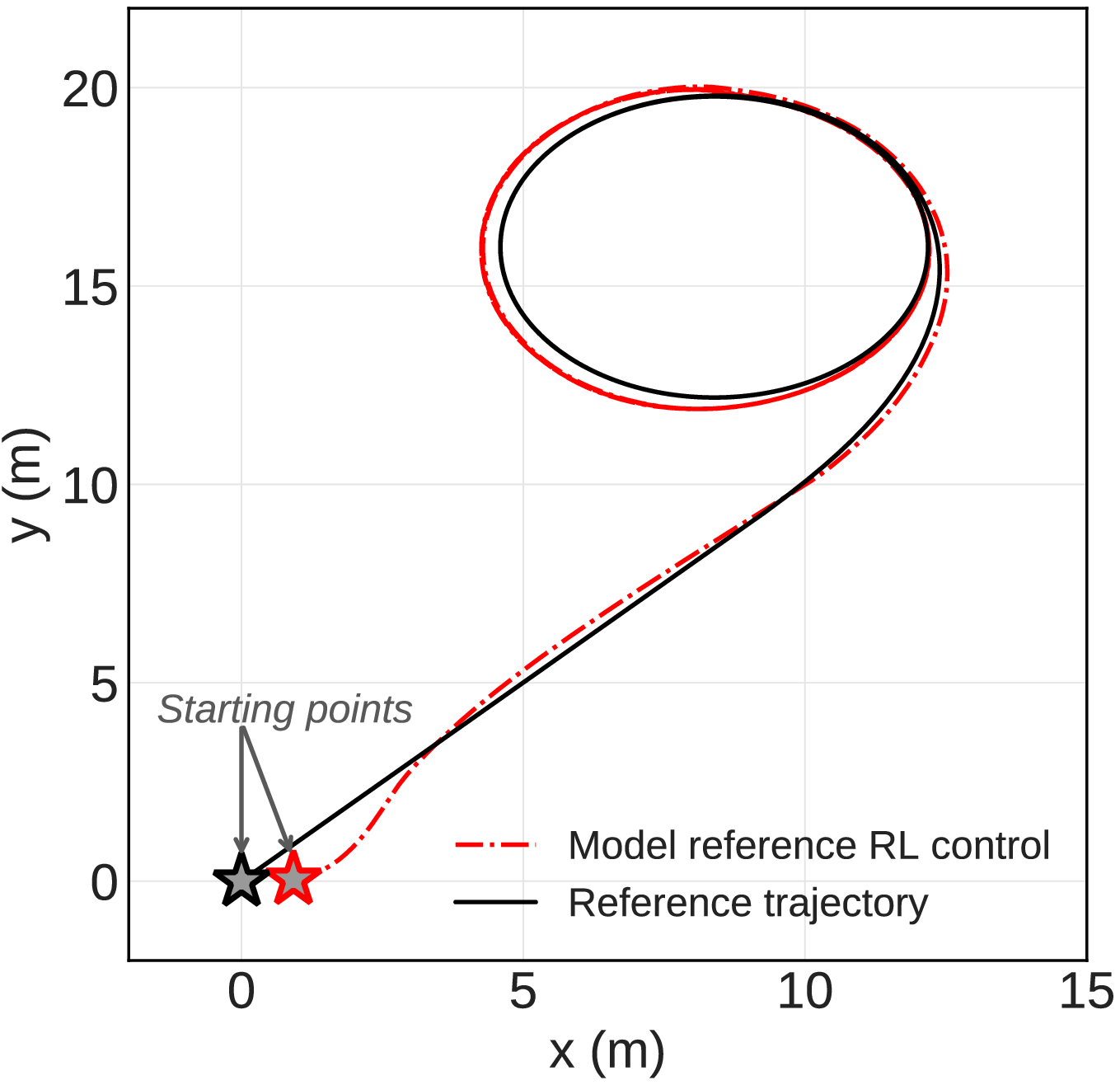}
	\end{minipage}}
 \hfill 	
  \subfloat[Only deep reinforcement learning]{
	\begin{minipage}[c][1\width]{
	   0.32\textwidth}
	   \centering
	   \includegraphics[width=1\textwidth]{./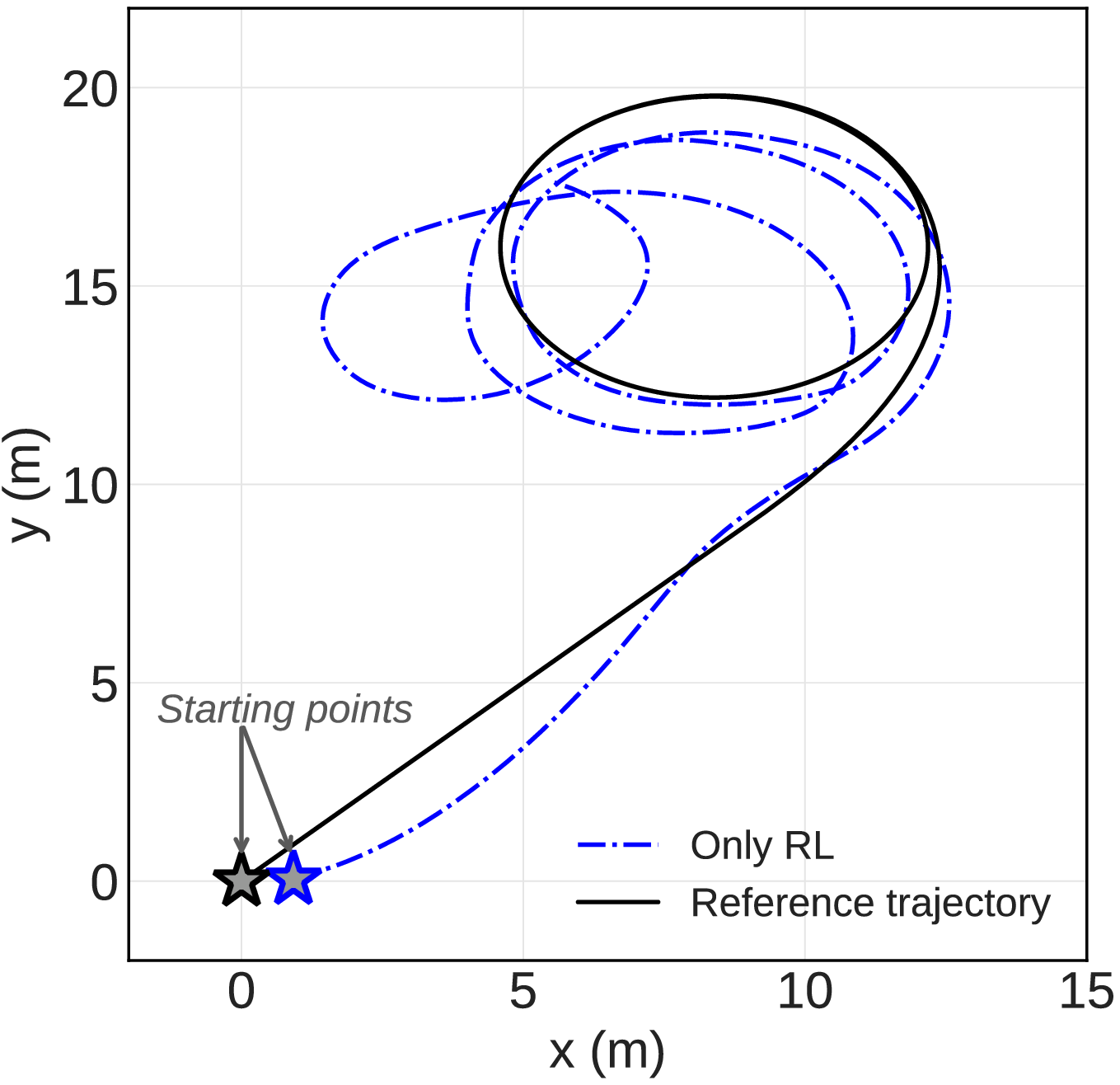}
	\end{minipage}}
 \hfill	
  \subfloat[Only baseline control]{
	\begin{minipage}[c][1\width]{
	   0.32\textwidth}
	   \centering
	   \includegraphics[width=1\textwidth]{./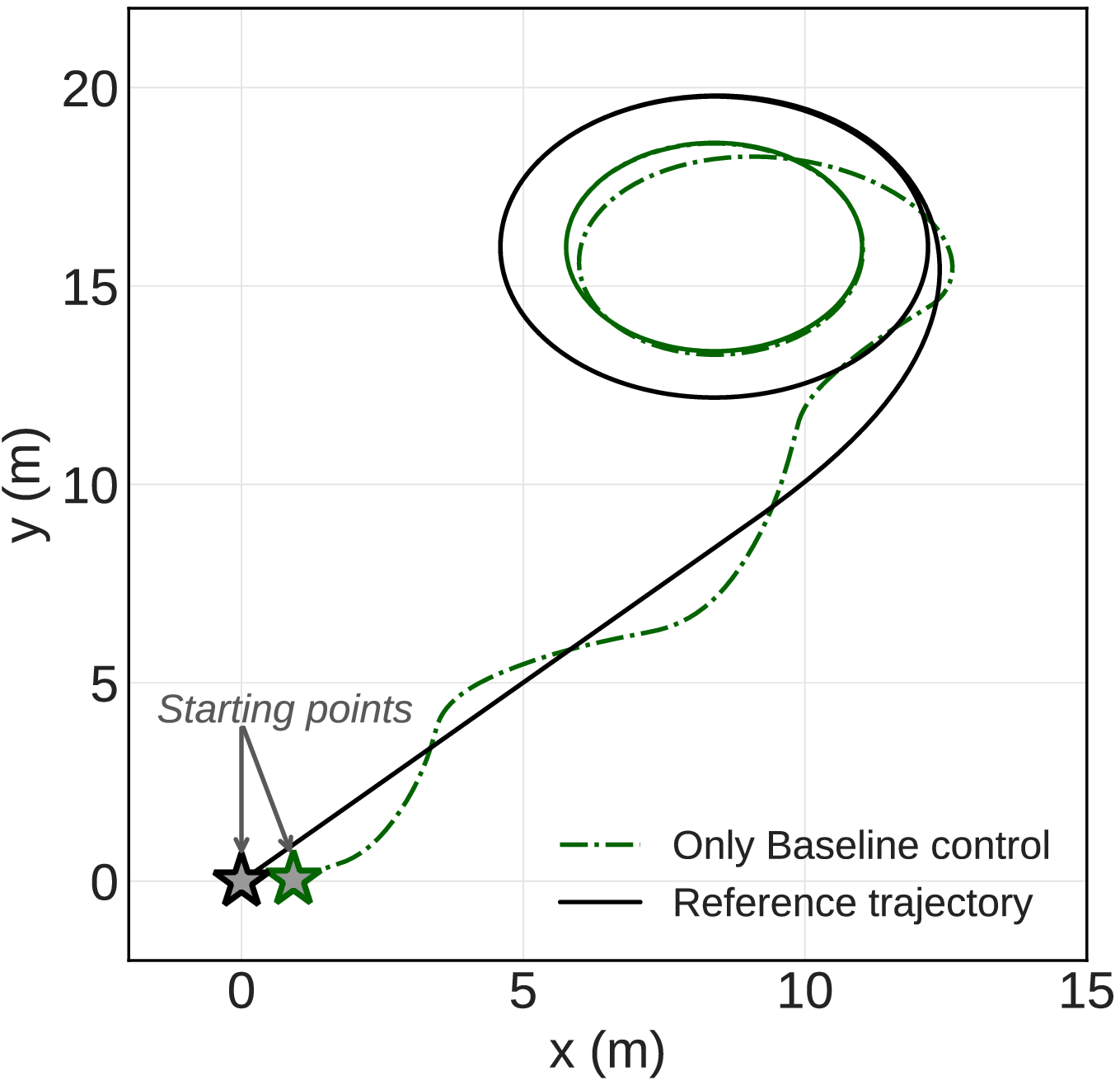}
	\end{minipage}}
\caption{Trajectory tracking results of the three algorithms  (The first evaluation)}
\label{fig:TrajTracking}
\end{figure*}

At the first evaluation stage, we run the ASV system for $200$ $s$ to demonstrate whether the control law can ensure stable trajectory tracking. Note that we run the ASV for $100$ $s$ at training.  The trajectory tracking performance of the three algorithms (our proposed algorithm, the baseline control $\boldsymbol{u}_0$, and only RL control) is shown in Figures \ref{fig:TrajTracking}. As observed from Figure \ref{fig:TrajTracking}.b, the control law learned merely using deep RL fails to ensure stable tracking performance. It implies that only deep RL cannot ensure the closed-loop stability. In addition, the baseline control itself fails to achieve acceptable tracking performance mainly due to the existence of system uncertainties. By combining the baseline control and deep RL,  the trajectory tracking performance is improved dramatically, and the closed-loop stability is also ensured. The position tracking errors are summarized in Figure \ref{fig:ErrorX} and \ref{fig:ErrorY}. Figure \ref{fig:ErrorDist} shows the absolute distance errors used to compare the tracking accuracy of the three algorithms. The introduction of the deep RL increases the tracking performance substantially.
\begin{figure}
    \centering
    \includegraphics[width=0.45\textwidth]{./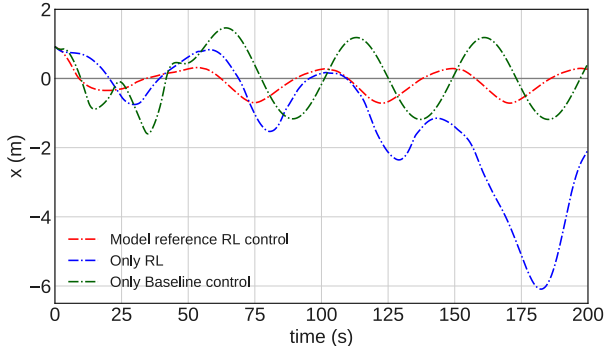}
    \caption{Position tracking errors ($e_x$)}
    \label{fig:ErrorX}
\end{figure}
\begin{figure}
    \centering
    \includegraphics[width=0.45\textwidth]{./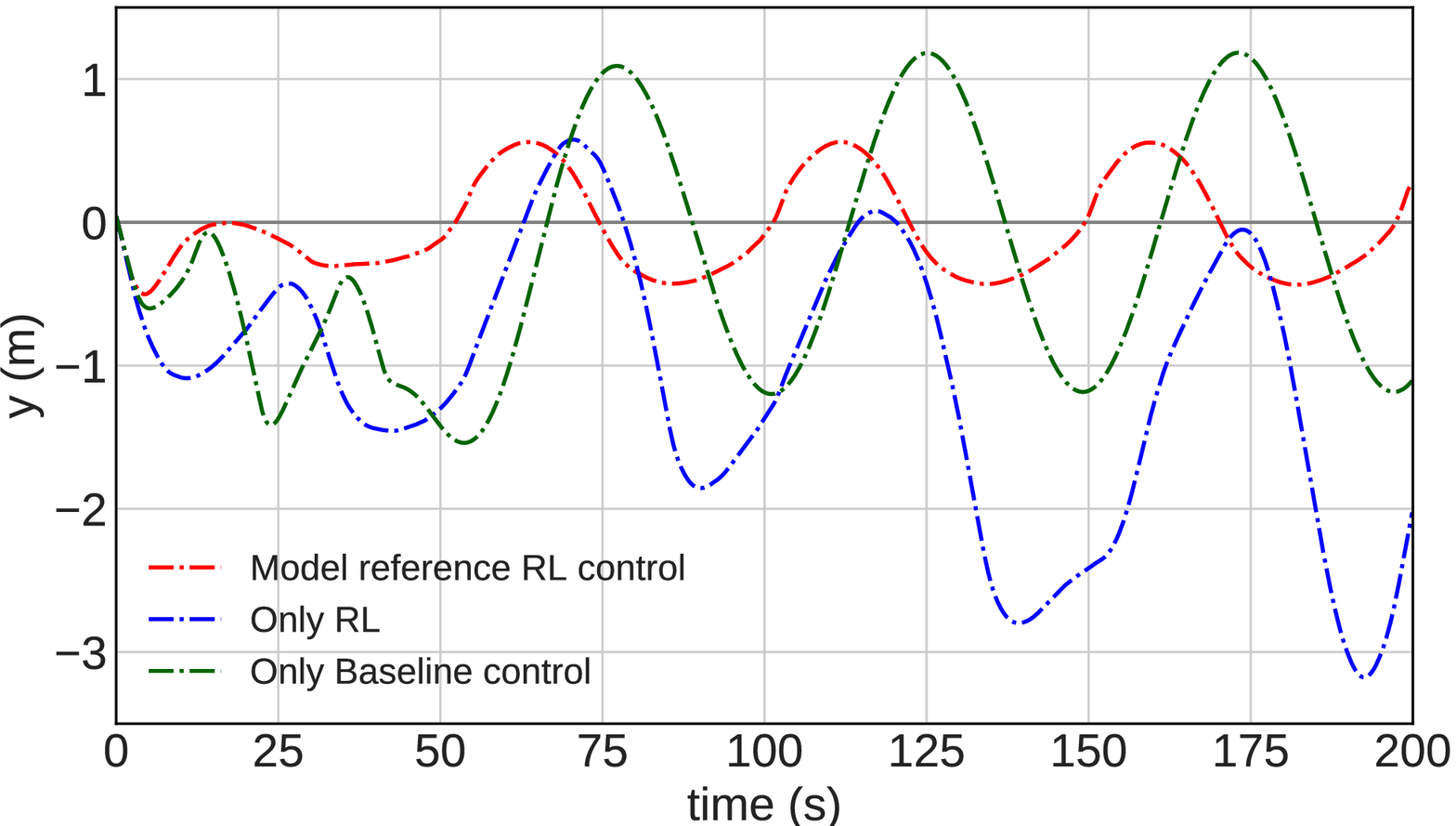}
    \caption{Position tracking errors ($e_y$)}
    \label{fig:ErrorY}
\end{figure}
\begin{figure}
    \centering
    \includegraphics[width=0.45\textwidth]{./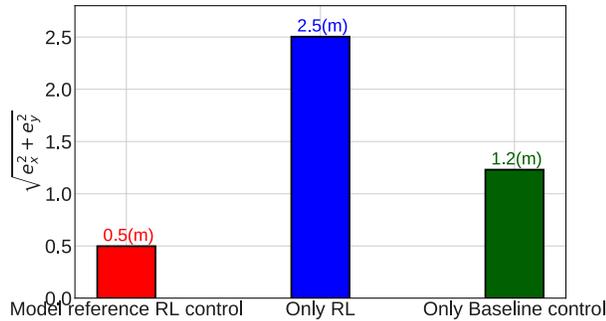}
    \caption{Mean absolute distance errors ($\sqrt{e_x^2+e_y^2}$)}
    \label{fig:ErrorDist} 
\end{figure}

At the second evaluation, we still run the ASV system for $200$ $s$, but change the reference trajectory. Note that we use the same learned control laws in both the first and the second evaluations. In the second evaluation, the reference angular acceleration is changed to
\begin{equation}
    \dot{r}_r=\left\{\begin{array}{ll}
    \frac{\pi}{600}\; rad/s^2 &\text{if }\;\; 25\;s \leq  t<50\;s \\
    -\frac{\pi}{600}\; rad/s^2 &\text{if } 125 \;s\leq  t<150\;s \\
    0    \; rad/s^2  & \text{otherwise}
    \end{array}
    \right.
\end{equation}
The trajectory tracking results are illustrated in Figure \ref{fig:TrajTracking_Case2}. Apparently, the proposed control algorithm can ensure closed-loop stability, while the vanilla RL fails to do so. A better tracking performance is obtained by the proposed control law in comparison with only baseline control.
\begin{figure*}[h]
  \subfloat[Model reference reinforcement learning control]{
	\begin{minipage}[c][1\width]{
	   0.32\textwidth}
	   \centering
	   \includegraphics[width=1\textwidth]{./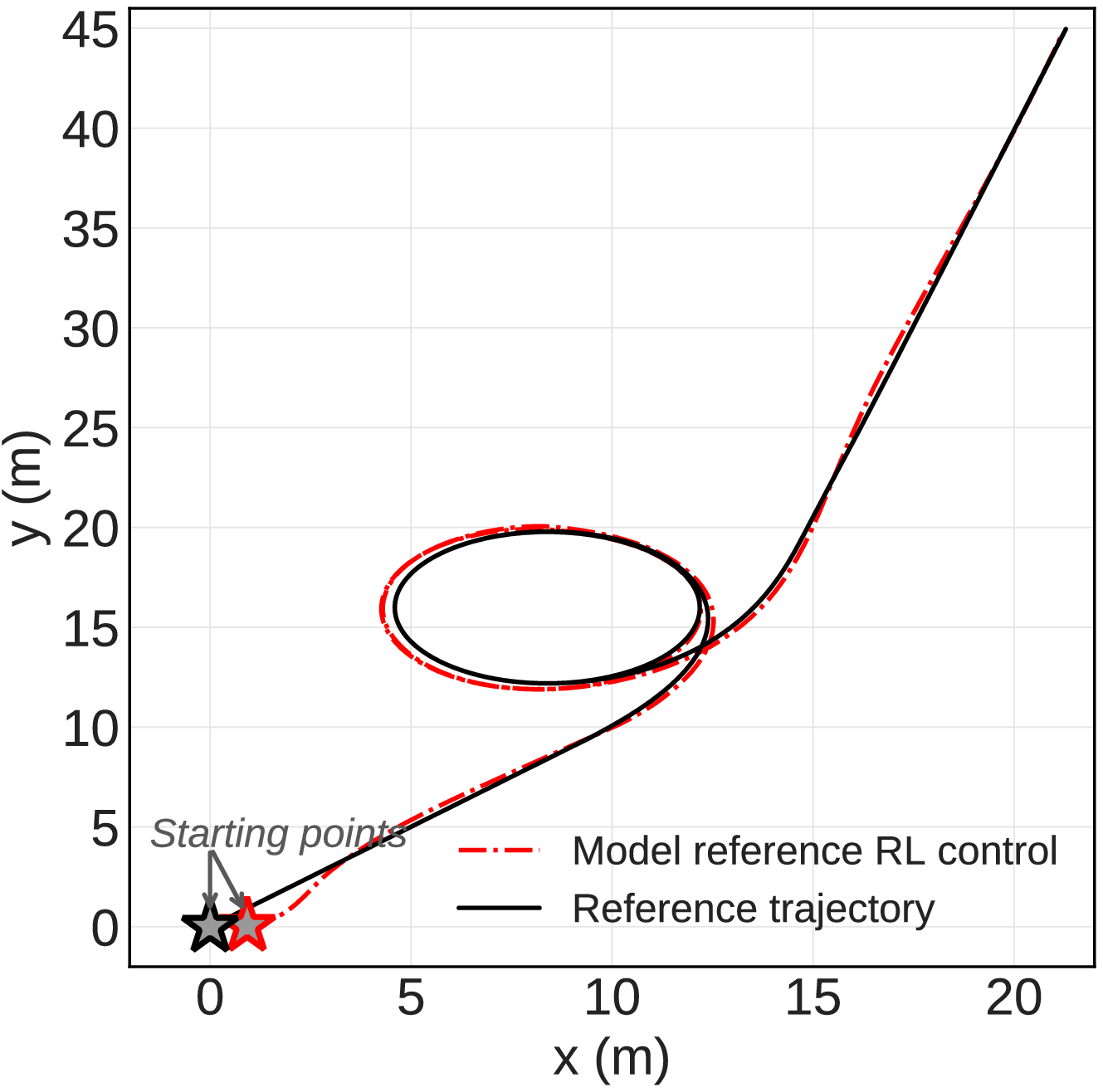}
	\end{minipage}}
 \hfill 	
  \subfloat[Only deep reinforcement learning]{
	\begin{minipage}[c][1\width]{
	   0.32\textwidth}
	   \centering
	   \includegraphics[width=1\textwidth]{./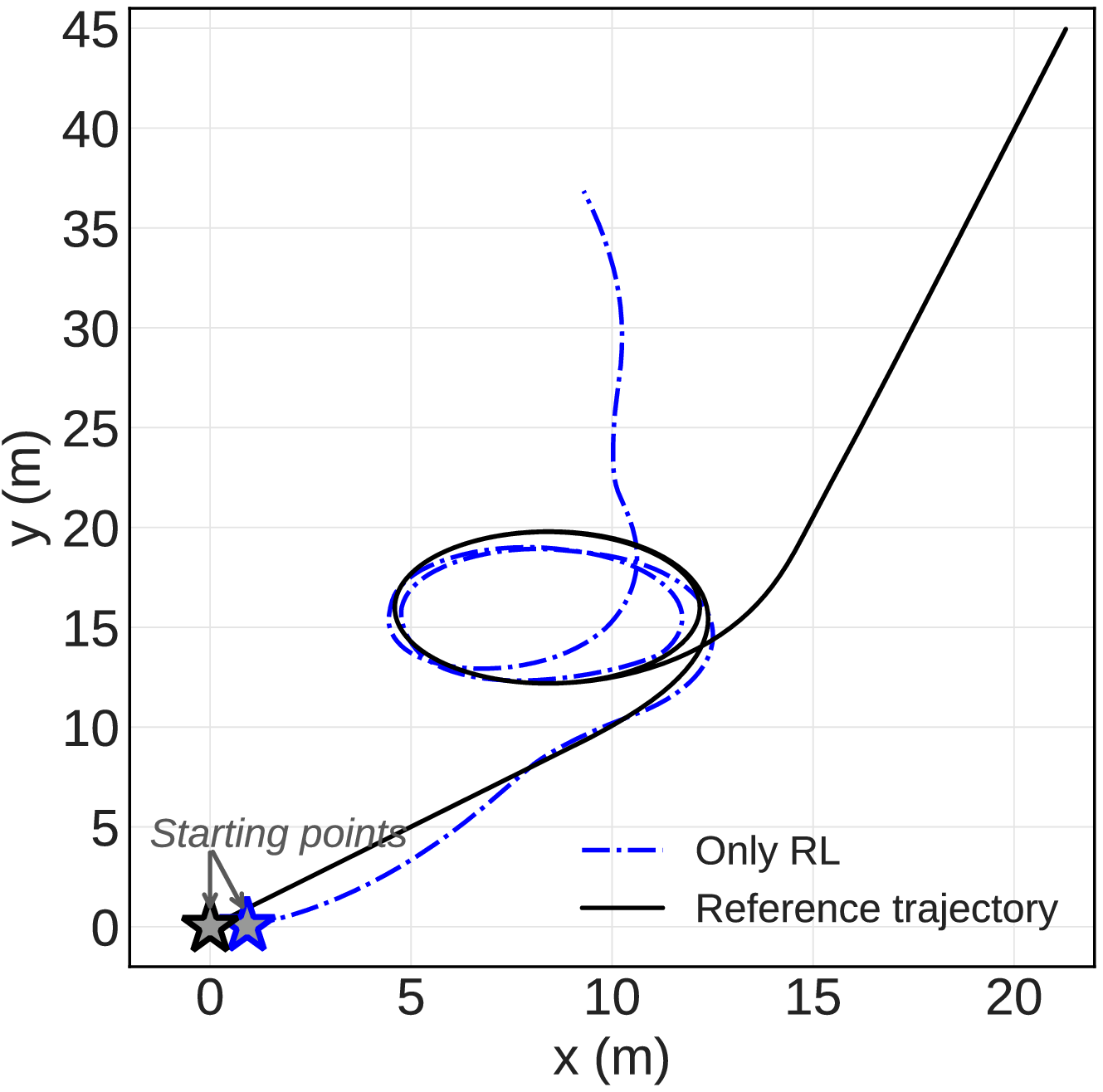}
	\end{minipage}}
 \hfill	
  \subfloat[Only baseline control]{
	\begin{minipage}[c][1\width]{
	   0.32\textwidth}
	   \centering
	   \includegraphics[width=1\textwidth]{./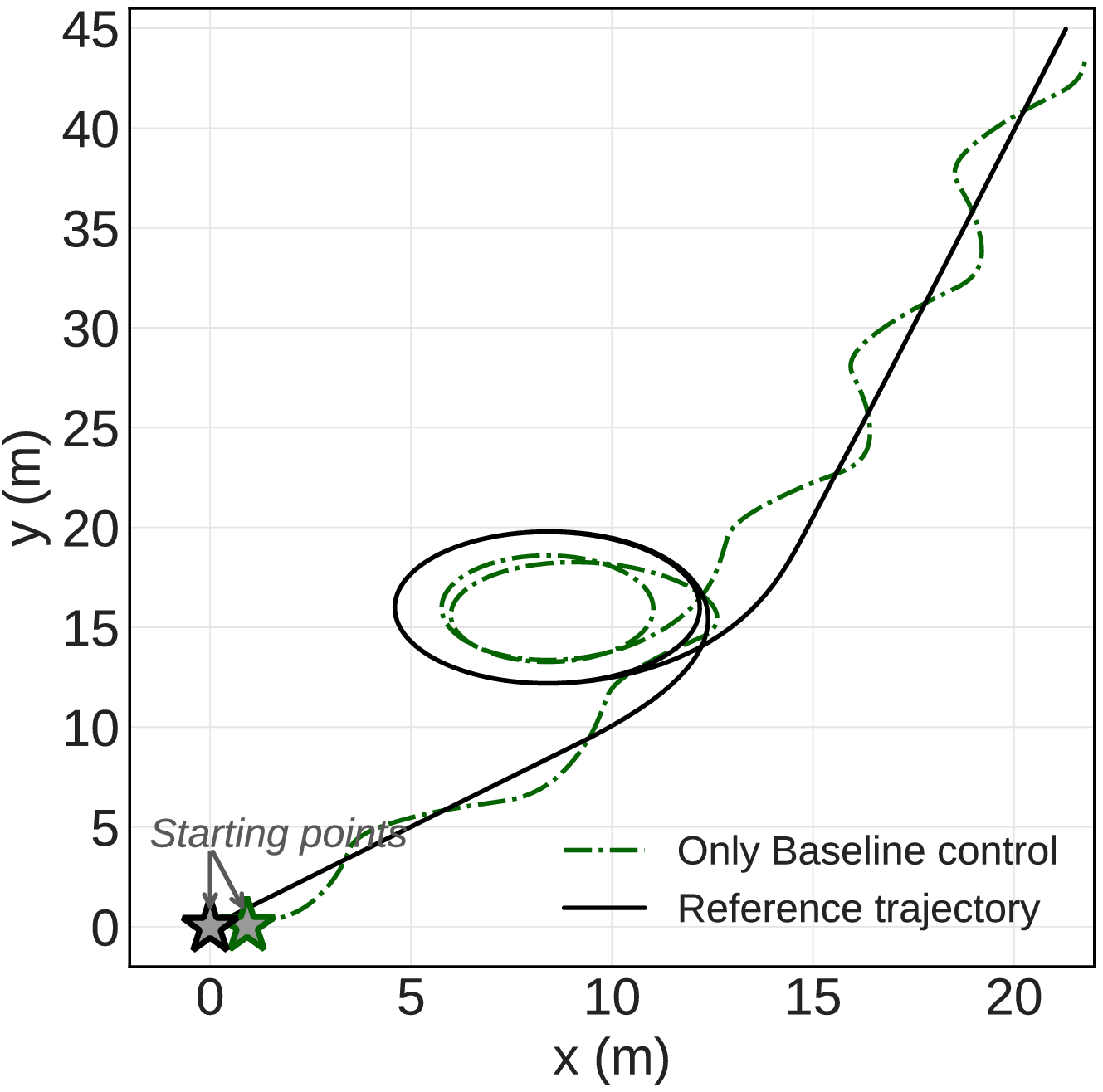}
	\end{minipage}}
\caption{Trajectory tracking results of the three algorithms (The second evaluation)}
\label{fig:TrajTracking_Case2}
\end{figure*}

\section{Conclusions} \label{sec:Concl}
In this paper, we presented a novel learning-based algorithm for the control of uncertain ASV systems by combining a conventional control method with deep reinforcement learning. With the conventional control, we ensured the overall closed-loop stability of the learning-based control and increase the sample efficiency of the deep RL. With the deep RL, we learned to compensate for the model uncertainties, and thus increased the trajectory tracking performance. In the future works, we will extend the results with the consideration of environmental disturbances. The theoretical results will be further verified via experiments instead of simulations. Sample efficiency of the proposed algorithm will also be analyzed.

\bibliography{References}
\bibliographystyle{IEEEtran}

\end{document}